\newcommand{\cmark}{\ding{51}}
\newcommand{\xmark}{\ding{55}}
\newcommand*\circled[1]{\tikz[baseline=(char.base)]{
            \node[shape=circle,draw,inner sep=1pt] (char) {#1};}}
\newcommand{\NAME}{TOrPEDO}
\newcolumntype{P}[1]{>{\centering\arraybackslash}p{#1}}
\newcommand{\nb}[2]{
  \fcolorbox{black}{yellow}{\bfseries\sffamily\scriptsize#1}
  {\sf\small$\blacktriangleright$\textit{#2}$\blacktriangleleft$}
 }
\newcommand{\nb}[2]{}
\newcommand\magicfunction{\ensuremath{\texttt{F}}}
\newcommand\tpcompute{\textsc{\texttt{CTP\_KS}}}
\newcommand{\analyze}{\textsc{Analyze}}
\newcommand{\checkalg}{\textsc{Check}}
\newcommand{\tosnf}{\textsc{Sys2Snf}}
\newcommand{\getuc}{\textsc{GetUC}}
\newcommand{\gettp}{\textsc{GetTP}}
\begin{document}

\mainmatter  

\title{Integrating Topological Proofs with Model Checking to Instrument Iterative Design}

\titlerunning{Integrating model checking and topological proofs}

\author{Claudio Menghi\inst{1}
\and Alessandro Maria Rizzi\inst{2}
\and Anna Bernasconi\inst{2}
}

\authorrunning{Integrating Topological Proofs with Model Checking}

\institute{
University of Luxembourg, Luxembourg\\
\mailsa\\
\and
Politecnico di Milano, Italy\\% Dipartimento di Elettronica, Informazione e Bioingegneria\\
\mailsb\\
}

\toctitle{Lecture Notes in Computer Science}
\tocauthor{Authors' Instructions}
\maketitle

\begin{abstract}
System development is not a linear, one-shot process. It proceeds through refinements and revisions. 
To support assurance that the system satisfies its requirements, it is desirable that continuous verification can be performed after each refinement or revision step.
To achieve practical adoption, formal system modeling and verification must accommodate continuous verification efficiently and effectively.
Our proposal to address this problem is \NAME , a verification approach where models are given via Partial Kripke Structures (PKSs) and requirements are specified as Linear-time Temporal Logic (LTL) properties.
PKSs support refinement, by deliberately indicating unspecified parts of the model that are later completed.
We support verification in two complementary forms: via model checking and proofs. Model checking is useful to provide counterexamples, i.e., pinpoint model behaviors that violate requirements. Proofs are instead useful since they can explain why requirements are satisfied. In our work, we introduce a specific concept of proof, called topological proof (TP). A TP produces a  slice of the original PKS which justifies the property satisfaction. Because models can be incomplete, \NAME\ supports reasoning on requirements satisfaction, violation, and possible satisfaction (in the case where the satisfaction depends on unknown parts).

\end{abstract}

\section{Introduction}

One of the goals of software engineering and formal methods is to provide automated verification tools that support designers in producing models of an envisioned system, which follows a set of properties of interest.
Many automated verification methods are available to help and guide the designer through this complex task. These methods include, among others, model checking and theorem proving. Typically, the designer benefits from automated support to understand why her system does not behave as expected (e.g., counterexamples), but she might find useful also information retrieved when the system already follows the specified requirements. While model checkers provide the former, theorem provers sustain the latter.
These usually rely on some form of deductive mechanism that, given a set of axioms, iteratively applies a set of rules until a theorem is proved. The proof consists of the specific sequence of deductive rules applied to prove the theorem. In literature, many approaches have dealt with integration of model checking and theorem proving at various levels (e.g.,~\cite{namjoshi2001certifying,cleaveland2002evidence,rajan1995integration,kupferman2005complementation}). These approaches are oriented to provide \textit{certified model checking} procedures rather than tools which actually help the design process.
Even when the idea is to provide a practically useful framework~\cite{PPZ01,PZ01}, the output consists of deductive proofs which are usually difficult to understand and hardly connectable with the designer's modeling choices. 
Moreover, verification techniques which only take into account completely specified designs do not comply with modern agile software design techniques.
In a recent work (\cite{Bernasconi2017,bernasconi2017arxiv}), we have considered cases in which a partial knowledge of the system model is available. However, the presented approach was mainly theoretical and lacked a practical implementation.

With the intent to provide a valuable support for a flexible design process, we formulate the initial problem on models that contain uncertain parts. Partial specification may be used, for instance, to represent the uncertainty of introducing, keeping, excluding particular portions of the design with respect to the complete model of the system. 
For this reason, we  chose Partial Kripke Structures (PKSs) as a formalism to represent general models. 
PKSs are a standard formalism used to reason on incomplete systems.
Among several applications, they have been used in requirement elicitation to reason about the system behavior from different points of view~\cite{easterbrook2001framework,brunet2006manifesto}.
Furthermore, other modeling formalisms such as Modal Transition Systems~\cite{larsen1988modal} (MTSs), can be converted into PKSs through a simple transformation~\cite{godefroid2003expressiveness}. 
Thus, the proposed solution can also be easily applied on models specified using MTSs, which are commonly used in software development~\cite{foster2006ltsa,UchitelFm}.
Kripke Structures (KSs) are particular instances of PKSs used to represent complete models.  Requirements on the model are expressed in Linear-time Temporal Logic (LTL).
Verification techniques  that consider PKSs return three alternative values: \emph{true} if the property holds in the partial model, \emph{false} if it does not hold, and \emph{maybe} if the property possibly holds, i.e., its satisfaction depends on the parts that still need to be refined.

Methods for verifying partial models naturally fit in modern software design processes~\cite{FASE18,foster2006ltsa,UchitelFm}.
In the iterative design process, the designer starts from a high level model of the system in which some portions can be left unspecified, representing design decisions that may be taken in later development steps.
As development proceeds, the initial model can be \emph{refined}, by filling parts that are left unspecified, or \emph{revised}, by changing parts that were already specified.
In a PKS a refinement is performed by associating a true/false value to previously unknown propositions, while revising may also involve adding or removing existing states or transitions, or changing values already assigned to propositions (i.e., a refinement is a type of revision).

A comprehensive integrated design framework able to support software designers in understanding \emph{why} properties are (possibly) satisfied -- as models are specified, refined, or revised -- is still missing.

We tackle this problem by presenting \NAME\ (TOpological Proof drivEn Development framewOrk), a novel automated verification framework, that:
\begin{itemize}
\item[(i)] supports a modeling formalism which allows a partial specification of the system design;
\item[(ii)] allows performing analysis and verification in the context of systems in which ``incompleteness'' represents a conceptual uncertainty;
\item[(iii)] provides guidance in the refinement process through complementary outputs: counterexamples and topological proofs;
\item[(iv)] when the system is completely specified, allows understanding which changes impact or not on certain properties.
\end{itemize} 

\NAME\ is based on the novel notion of \textit{topological proof} (TP), which tries to overcome the complexity of deductive proofs and is designed to make proofs understandable on the original system design.
A TP is a \textit{slice} of the original model that specifies which part of it influences the property satisfaction.
If the slice defined by the TP is not preserved during a refinement or a revision,
there is no assurance that the property holds (possibly holds) in the refined or revised model.
This paper proposes an algorithm to compute topological proofs---which relies on the notion of \textit{unsatisfiable cores} (UCs) \cite{SCHUPPAN2016155}---and proves its correctness on PKSs. 
\NAME\  has been implemented on top of NuSMV~\cite{nusmv} and PLTL-MUP~\cite{sergeantfinding}.
The implementation has been used to evaluate how \NAME\ helps software designers by considering a set of examples coming from literature including both completely specified and partially specified models.

The paper is structured as follows.
Section~\ref{sec:torpedo} describes \NAME.
Section~\ref{sec:background} discusses the background.
Sections~\ref{sec:topologicalproof} and \ref{sec:automatedsupport} present the theoretical results and the algorithms that support \NAME.
Section~\ref{sec:evaluation} evaluates the achieved results.
Section~\ref{sec:related} discusses related work.
Section~\ref{sec:conclusions} concludes.
 \section{\NAME}
\label{sec:torpedo}

\NAME\ is a proof based development framework which allows verifying initial designs and evaluating their revisions.
To illustrate \NAME , we use an example: a designer needs to develop a simple vacuum-cleaner robot which has to satisfy the requirements  in Table~\ref{tab:motivatinproperties}, specified through LTL formulae and plain English text. 
These are simple requirements that will be used for illustration purposes.
Informally, when turned on, the vacuum-cleaner agent can move with the purpose to reach a site which may be cleaned in case it is dirty.

\begin{figure}
\begin{floatrow}\CenterFloatBoxes
\ffigbox{
  \tikzset{every loop/.style={in=120,out=150,looseness=4}}

\begin{adjustwidth}{-2em}{} 
\begin{tikzpicture}[->,>=stealth',shorten >=1pt,auto,node distance=2cm,
                    thick,main node/.style={circle,draw,font=\sffamily\Large\bfseries}]

\node[state,initial,align=center, minimum size=1.2cm,initial text=]          (off) 
[label={
[align=left,yshift=0.1cm]
above:
OFF}]   {$move=\LTLfalse$\\ $suck=\LTLfalse$\\ $on=\LTLfalse$\\ $reached=\LTLfalse$};

\node[state,align=center,  minimum size=1.2cm]         (idle1) 
[
right= 0.6cm of off,    
label={
[align=left,yshift=0.1cm]
above:
IDLE}
]  {$move=\LTLfalse$\\ $suck=\LTLfalse$\\ $on=\LTLtrue$\\ $reached = ?$};
 
\node[state,align=center, minimum size=1.2cm]          (idle2) 
[below =0.7cm  of off,
label={
[align=left,yshift=0.1cm]
above:
CLEANING}]  {$move=?$\\ $suck=\LTLtrue$\\ $on=\LTLtrue$\\ $reached = \LTLtrue$};

\node[state,align=center, minimum size=1.2cm]          (ac) 
[
 right= 0.6cm of idle2,
label={
[align=left,yshift=0.1cm]
above:
MOVING}]  {$move=\LTLtrue$\\ $suck=?$\\ $on=\LTLtrue$\\ $reached=?$};

\path[->]      
(off) edge [bend left=5] node [above]{} (idle1)
(idle1) edge [bend right=5] node [above]{} (off)

(idle2) edge node [above]{} (idle1)

(idle1) edge [bend left=40] node [above]{} (ac)

(ac) edge [bend left=5]node [above]{} (idle2)

(off) edge [loop above] node{} ()
(idle1) edge [loop above] node{} ()
(idle2) edge [loop below] node{} ()
(ac) edge [loop below] node{} ();

\end{tikzpicture}
\end{adjustwidth} }{
  \caption{Model of a vacuum-cleaner robot.}
    \label{fig:motivatingmodel}
}
\ttabbox{

\begin{tabular}{l}
\toprule
	\textbf{LTL formulae} \\
	\midrule
$\phi_1=\LTLg (\mathit{suck} \rightarrow \mathit{reached})$ \\
$\phi_2=\LTLg ((\neg \mathit{move}) \LTLw \mathit{on} )$ \\
$\phi_3=\LTLg(((\neg \mathit{move}) \wedge \mathit{on}) \rightarrow \mathit{suck})$ \\		
$\phi_4=((\neg \mathit{suck}) \LTLw (\mathit{move} \land (\neg \mathit{suck}) ))$ \\ 
		\midrule
		\textbf{Textual requirements} \\ 
	\midrule
	\begin{minipage}{2in}
   \vskip 4pt
	$\phi_1$: the robot is drawing dust ($\mathit{suck}$) only if it $\mathit{reached}$ the cleaning site.
	
	$\phi_2$: the robot must be turned $\mathit{on}$ before it can $\mathit{move}$. 
	
	$\phi_3$: if the robot is $\mathit{on}$ and stationary ($\neg \mathit{move}$), it must be drawing dust ($\mathit{suck}$).
	
	$\phi_4$: the robot must $\mathit{move}$ before it is allowed to draw dust ($\mathit{suck}$).
    \vskip 4pt
	\end{minipage}
	\\ 
	\bottomrule
\end{tabular}	
 }{
  \caption{Sample requirements.}
  \label{tab:motivatinproperties}
}
\end{floatrow}
\vspace{-0.3cm}
\end{figure}

The \NAME\ framework is illustrated in Fig.~\ref{fig:renovated} and it is made of four phases: \textsc{initial design}, \textsc{analysis}, \textsc{revision}, and \textsc{re-check}. 
Dashed boxes marked with the person icon represent phases performed manually by the designer, while dashed boxes marked with the gears icon contain phases operated using automated support.

\textbf{\textsc{Initial design.}} This phase concerns the initial definition of the model of the system, formalized as a PKS (marked in Fig.~\ref{fig:renovated} with \circled{\scriptsize{1}}) along with the property of interest, in LTL (\circled{\scriptsize{2}}).

In the vacuum-cleaner example, the designer has identified two actions the robot can perform:
$\mathit{move}$, i.e., the agent travels to the cleaning site;
$\mathit{suck}$, i.e., the agent is drawing the dust.
She has also identified two conditions that can trigger actions:
$\mathit{on}$, true when the robot is turned on;	
$\mathit{reached}$, true when the robot has reached the cleaning site.   
These actions and conditions determine the designer description of the preliminary model presented in Fig.~\ref{tab:motivatinproperties}.
The model is made by four states representing the configuration of the vacuum-cleaner robot.
The state $\mathit{OFF}$ represents the robot being shut down, 
$\mathit{IDLE}$ the robot being tuned in w.r.t. a cleaning call, 
$\mathit{MOVING}$ the robot reaching the cleaning site, 
and $\mathit{CLEANING}$ the robot performing its duty.

Each state is labeled with the actions and conditions that are true in that state. 
Given an action or condition $\alpha$ and a state $s$, we use the notation:  
$\alpha=\LTLtrue$ to indicate that $\alpha$ occurs when the robot is in state $s$;
$\alpha=\LTLfalse$ to indicate that $\alpha$ does not occur when the robot is in state $s$;
$\alpha=?$ to indicate that there is uncertainty on whether $\alpha$ occurs when the robot is in state $s$.
In the first two cases the designer is sure that an action must (must not) be performed or a condition must be true (false) in a state; in the third case the designer is uncertain about the design.
Specifically, she does not know whether the robot should perform an action or whether a condition should be true in order for a state to be entered.

\begin{figure*}[t]     
\includegraphics[width=\linewidth]{./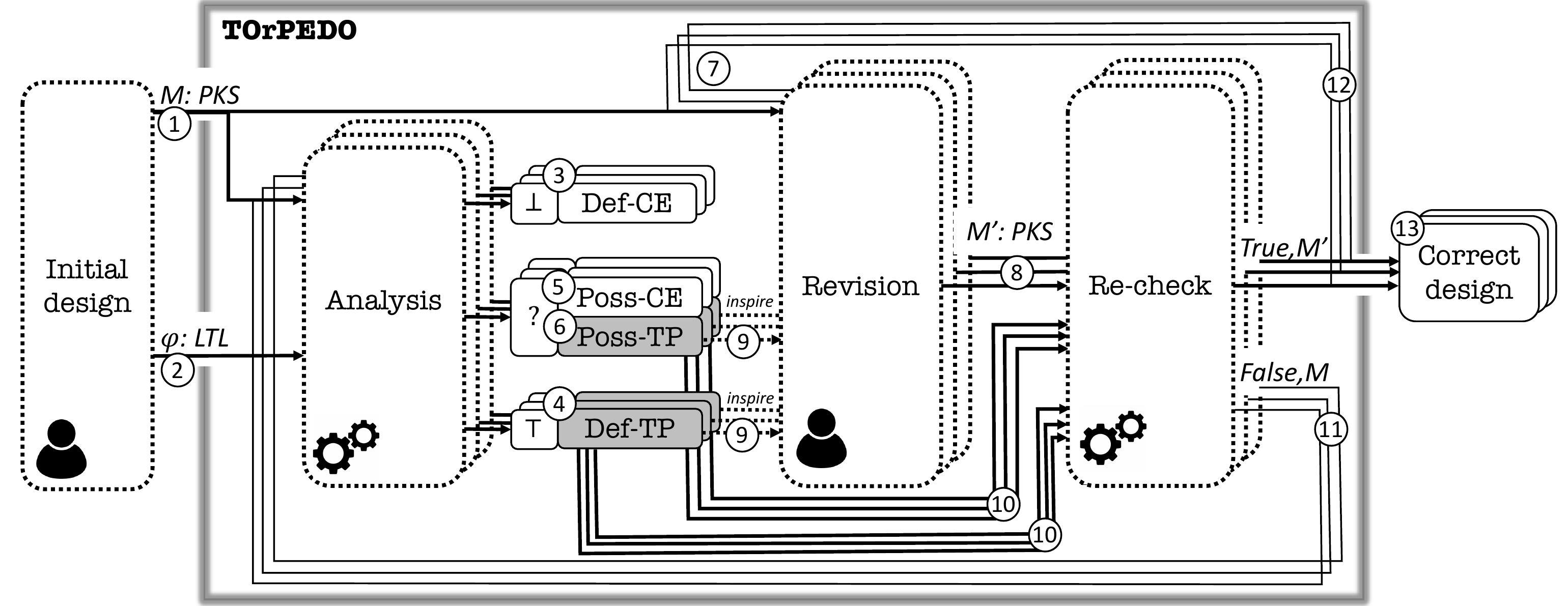}
	\caption{\NAME\ structure. Continuous arrows represent inputs and outputs to phases. Circled numbers are used to reference the image in the text.}  
\label{fig:renovated}
\vspace{-0.3cm}
\end{figure*}

\textbf{\textsc{Analysis.}} \NAME\ assists the designer with an automated analysis, which includes the following elements:

\begin{enumerate}
\item[(i)] information about \emph{what is wrong} in the current design. This information includes a definitive-counterexample, which indicates a behavior that depends on already performed design choices and violates the properties of interest. The definitive-counterexample (i.e., Def-CE \circled{\scriptsize{3}}) can be used to improve $M$ and produce a revised version $M^\prime$ that satisfies or possibly satisfies the property of interest.
\item[(ii)] information about \emph{what is correct} in the current design. This information includes definitive-topological proofs (i.e., Def-TP \circled{\scriptsize{4}}) that indicate a portion of the design that ensures property satisfaction;
\item[(iii)] information about \emph{what could be wrong}/\emph{correct} in the current design depending on how uncertainty is removed. This information includes: a possible-counterexample (i.e., Poss-CE \circled{\scriptsize{5}}), indicating a behavior (which depends on uncertain actions and conditions) that violates the properties of interest, and a possible-topological proof (i.e., Poss-TP \circled{\scriptsize{6}}), indicating a portion of the design that ensures the possible satisfaction of the property of interest. The designer can use the possible-counterexample and the possible-topological proof to improve $M$.
\end{enumerate}

In the following we will use the notation $x$-topological proofs or $x$-TP to indicate arbitrarily definitive-topological or possible-topological proofs.

In the vacuum-cleaner example, the designer analyzes her proposed design using \NAME .

\begin{sloppypar}
Property $\phi_1$ is possibly satisfied. \NAME\ returns the possible-counterexample $\mathit{OFF}$, $\mathit{IDLE}$, $(\mathit{MOVING})^{\omega}$.
This possible-counterexample shows a run that violates the property of interest.
\NAME\ also returns a possible-topological proof showing that the property remains possibly satisfied given that $\mathit{OFF}$ remains the only initial state, $reached$ still holds in $\mathit{CLEANING}$, and $suck$ does not hold in $\mathit{OFF}$ and $\mathit{IDLE}$, 
while unknown in $\mathit{MOVING}$ (note that, if $suck$ was set to $\bot$ in this state we would indeed obtain a proof). In addition, all transitions must be preserved.
\end{sloppypar}

Property $\phi_2$ is satisfied. \NAME\ returns the definitive-topological proof, which shows that the property remains satisfied given that $\mathit{OFF}$ remains the only initial state, $on$ still holds in $\mathit{MOVING}$ and $\mathit{CLEANING}$, and $move$ does not hold in $\mathit{OFF}$ and $\mathit{IDLE}$. In addition, all transitions must be preserved.

Property $\phi_3$ is not satisfied. \NAME\ returns a definitive-counterexample, e.g., $\mathit{OFF},IDLE^\omega$. 
The counterexample shows that it is not true that always a robot which is operative and not moving is drawing dust.

Property $\phi_4$ is possibly satisfied. \NAME\ returns the possible-counter- example $\mathit{OFF}$, $(\mathit{IDLE},\,\mathit{MOVING},\,\mathit{CLEANING},\,\mathit{IDLE},\,\mathit{OFF})^{\omega}$, which specifies a sample run for which it is not true that the robot is only moving (and not cleaning) before it can draw dust. 
The topological proof shows that the property remains possibly satisfied given that the following characteristics of the model are preserved: 
from the only initial state $\mathit{OFF}$ one can loop or move to $\mathit{IDLE}$, from which one can loop, return to $\mathit{OFF}$, or go to $\mathit{MOVING}$; in addition $move$ must hold in $\mathit{MOVING}$ and $suck$ must not occur in $\mathit{OFF}$ and $\mathit{IDLE}$, while unknown in $\mathit{MOVING}$.

\textbf{\textsc{Revision.}} As development proceeds, the designer may want to revise the existing model by 
changing some of its parts: adding/removing states and transitions or by changing propositions labelling inside states.
Revision may include refinement, i.e., replacing with $\LTLtrue$ and $\LTLfalse$ some unknown values in the atomic propositions. 
The inputs of this phase are the initial model $M$ (\circled{\scriptsize{1}}), or an already revised model (\circled{\scriptsize{7}}), 
and the $x$-TP that can be used by the designer as a guideline for the revision (\circled{\scriptsize{9}}).
 The output is another revised model $M'$ (\circled{\scriptsize{8}}).

\textit{Revision 1.} The designer would like her model to not violate any property of interest.
She examines the counterexample of $\phi_3$ to understand why it is not satisfied and envisions a revised model that could satisfy the property.
She also consults the $x$-TPs of properties $\phi_1$, $\phi_2$, and $\phi_4$ in order to be sure to preserve their verification results.
She thus decides to change the value of $move$ in state $\mathit{IDLE}$ from $\bot$ to $\top$.
Since she foresees $\phi_3$ is now satisfied, she reruns the \textsc{analysis} for this property.
\NAME\ provides the corresponding $x$-TP.

\textit{Revision 2.} The designer decides to further improve her model by proposing a refinement: $move$ becomes $\top$ in state $\mathit{CLEANING}$ and $reached$ becomes $\bot$ in state $\mathit{IDLE}$.
Since $\phi_1$, $\phi_2$, $\phi_3$, and $\phi_4$ were previously not violated, \NAME\ performs the \textsc{re-check} phase for each property.

\textbf{\textsc{Re-check.}} The automated verification tool provided by \NAME\ checks whether all the changes in the current model revision (\circled{\scriptsize{8}}) are compliant with the $x$-TPs (\circled{\scriptsize{10}}), i.e., changes applied to the revised model do not include parts that had to be preserved according to the proof.
If a property of interest is (possibly) satisfied in a previous model (\circled{\scriptsize{1}}), and the revision of the model is compliant with the property $x$-TP, the designer has the guarantee that the property is (possibly) satisfied in the revision.
Thus, she can perform another model revision round (\circled{\scriptsize{12}}) or approve the current design (\circled{\scriptsize{13}}).
Otherwise, \NAME\ re-executes the \textsc{analysis} (\circled{\scriptsize{11}}). 

In the vacuum-cleaner case, the second revision passes the \textsc{re-check} and the designer proceeds to a new revision phase.

 \section{Background}
\label{sec:background}
We present some background and notation necessary to understand the rest of the paper.
First, we describe how to model the system under development and its properties. 
Then, we present the unsatisfiable core which is the element upon which our algorithm for  computing topological proofs is based. 

\subsection{Modeling  systems and  properties}
\label{sub:modeling}
We first describe Partial Kripke Structures (PKS)
and Kripke Structures (KS), formalisms that allow modeling the systems under development. 
Then, we briefly introduce the semantics for LTL properties on PKSs and KSs and how to perform model checking on these structures. 

PKS are a modeling formalism that can be adopted when the value of some propositions is uncertain on selected states. 
\begin{definition}[\cite{bruns1999model}] 
	A \emph{Partial Kripke Structure} $M$ is a tuple $\langle S, R,S_0,AP,$ $L \rangle$, where:
$S$ is a set of states;
$R\subseteq S\times S$ is a left-total transition relation on $S$;
$S_0$ is a set of initial states;
$AP$ is a set of atomic propositions;
$L: S\times AP\rightarrow \{\top,?,\bot\}$ is a function that, for each state in $S$, associates a truth value to every atomic proposition in $AP$.
\end{definition}
Informally, a PKS represents a system as a set of states and transitions between these states. 
Uncertainty on the $AP$ is represented through the value $?$.
The model of the vacuum-cleaner agent presented in Fig.~\ref{fig:motivatingmodel} is a PKS where propositions in $AP$ are used to model actions and conditions.

\begin{definition}[\cite{Kripke1963-KRISCO}] 
	A \emph{Kripke Structure}  $M$ is a PKS $\langle S, R,S_0,AP,L \rangle$, where $L: S\times AP\rightarrow \{\top,\bot\}$.
\end{definition}

PKSs can be related to other PKSs or to KSs respectively through \textit{refinement} and \textit{completion}.

\begin{definition}[\cite{bruns2000generalized}] 
\label{def:refinement}
 Let $M=\langle S, R,S_0,AP,L \rangle$ be a PKS.
 A refinement of $M$ is a PKS $M^\prime=\langle S, R, S_0,$ $AP,$ $L^\prime \rangle$ 
 where $L^\prime$ is such that
\begin{itemize}
\item  for all $s\in S$, $\alpha \in AP$ if $L(s, \alpha)=\LTLtrue \rightarrow L^\prime(s, \alpha)=\LTLtrue$;
\item  for all $s\in S$, $\alpha \in AP$ if $L(s, \alpha)=\LTLfalse \rightarrow L^\prime(s, \alpha)=\LTLfalse$.
\end{itemize} 
\end{definition}

We indicate that $M^\prime$ is a refinement of $M$ using the notation  $M  \preceq M^\prime$. 
Intuitively, the notion of refinement allows assigning a $\LTLtrue$ or a $\LTLfalse$ value to an atomic proposition $\alpha$ in a state $s$ s.t. $L^\prime(s, \alpha)=?$.

\begin{definition}[\cite{bruns2000generalized}] 
Let $M$ be a PKS and let $M^\prime$ be a KS. 
Then $M^\prime$ is a \emph{completion} of $M$ if and only if  $M \preceq M^\prime$.
\end{definition} 

Intuitively, a completion of a PKS is a KS obtained by assigning a $\LTLtrue$ or a $\LTLfalse$ value to every atomic propositions $\alpha$ and state $s$ s.t. $L^\prime(s, \alpha)=?$.

\vspace{2mm}
\noindent
\textbf{Semantics of LTL properties.} 
For KSs we consider the classical LTL semantics $[M \models \phi]$ over infinite words that associates to a model $M$ and a formula $\phi$ a truth value in the set $\{ \bot, \top \}$. The interested reader may refer, for example, to~\cite{katoen2008}.
Let $M$ be a KS and $\phi$ be an LTL property. We assume that the function $\checkalg$, such that $\langle res, c \rangle =\checkalg (M $, $\phi)$, returns a tuple $\langle res, c \rangle$, where $res$ is the model checking result in $\{ \top, \bot \}$ and, if $res=\bot$, $c$ is the counterexample.

Instead, when the satisfaction of  LTL over PKSs is of interest, two semantics can be considered: the three-valued or the thorough semantics.

The \emph{three-valued LTL semantics}~\cite{bruns1999model}  $[M \models_{3} \phi]$ associates to a model $M$ and a formula $\phi$ a truth value in the set $\{ \bot, ?, \top \}$ and is defined based on the information ordering $\top >\, ? > \bot$, i.e., on the assumption that $\top$ ``provides more information'' than $?$ and $?$ ``provides more information'' than $\bot$~\cite{bruns1999model}.
The three-valued LTL semantics is defined by considering paths of the model $M$. A path $\pi$ is a sequence of states $s_0,s_1,\ldots$ such that, for all $i \geq 0$, $(s_i, s_{i+1}) \in R$.

\begin{definition}[\cite{bruns1999model}]
Let $M = \langle S, R,$ $S_0, AP, L \rangle$ be a PKS,
let $\pi=s_0,s_1,\ldots$ be a path, and 
let $\phi$ be an LTL formula. Then, the \emph{three-valued semantics} $[(M,\pi)\models_3\phi]$ is defined inductively as follows:
\begin{align*}
&[(M,\pi) \models_3 p] & = &&& L(s_0,p)\\
&[(M,\pi) \models_3 \lnot\phi] & =  &&& comp([(M,\pi) \models_3 \phi]) \\
&[(M,\pi) \models_3 \phi_1 \LTLand \phi_2] & = &&& \min([(M,\pi) \models_3 \phi_1],[(M,\pi) \models_3 \phi_2])\\
&[(M,\pi) \models_3 \LTLnext \phi] & =  &&& [(M,\pi^1) \models_3 \phi]\\
&[(M,\pi) \models_3 \phi_1 \LTLuntil \phi_2] & = &&& \max_{j\geq 0}(\min(\{[(M,\pi^i) \models_3 \phi_1]|i<j\} \cup \{[(M,\pi^j) \models_3 \phi_2]\}))
\end{align*}
\end{definition}

The conjunction (resp. disjunction) is defined as the minimum (resp. maximum) of its arguments, following the order $\bot<\,?<\top$. These functions are extended to sets with min($\emptyset$)=$\top$ and max($\emptyset$)=$\bot$.  The $comp$ operator maps $\top$ to $\bot$, $\bot$ to $\top$, and $?$ to $?$

\begin{definition}[\cite{bruns1999model}]
Let $M = \langle S, R,$ $S_0, AP, L \rangle$ be a PKS,
$s$ be a state of $M$
and $\phi$ be an LTL formula. Then
$ [(M, s) \models_3 \phi]=\min(\{[(M, \pi) \models_3 \phi] \mid \pi^0=s\})$.
\end{definition}

Intuitively this means that, given a formula $\phi$, each state $s$ of $M$ is associated with the minimum of the values obtained considering the LTL semantics over any path $\pi$ that starts in $s$.

When the three-valued semantics is considered, if $[M \models_{3} \phi] = \top$ then in every completion of $M$ formula $\phi$ is true and if $[M \models_{3} \phi] = \bot$, then in every completion of $M$ formula $\phi$ is false.
In general, when $[M \models_{3} \phi] = ?$, there exist both completions of $M$ that satisfy and do not satisfy $\phi$.
However, there are cases in which all the completions of $M$ satisfy (or do not satisfy) $\phi$.
For this reason, the alternative thorough LTL semantics~\cite{bruns2000generalized} has been proposed ($[M \models_{T} \phi]$). 

The \emph{thorough LTL semantics}~\cite{bruns2000generalized} dictates that $[M \models_{T} \phi] = \top$ if  in every completion of $M$ formula $\phi$ is true and  $[M \models_{T} \phi] = \bot$ if in every completion of $M$ formula $\phi$ is false.
This ensures that, when $[M \models_{T} \phi] =?$, there exist both completions of $M$ that satisfy $\phi$ and completions of $M$ that do not satisfy $\phi$.

\begin{definition}[\cite{bruns2000generalized}]
Let $M$ be a PKS and 
let $\phi$ be an LTL formula.
Then,
\begin{equation}
    [M \models_T \phi] \buildrel \text{def}\over = 
                						 \begin{cases}
                  				 				\top & \quad \text{if}\ M^\prime \models \phi \text{ for every completion } M^\prime \text{ of } M\\
                  								\bot & \quad \text{if}\ M^\prime \not\models \phi \text{ for every completion } M^\prime \text{ of } M \\
                  								? & \quad  \text{otherwise}
                						\end{cases}
\end{equation}
\end{definition}
Note that, when a PKS is a KS, $[M \models_{3} \phi]= [M \models_{T} \phi]=[M \models \phi]$.

\begin{lemma}[\cite{godefroid2011ltl}]
Let $M$ be a PKS and 
let $\phi$ be an LTL formula.
Then 
\begin{enumerate*}
\item[] $[M \models_3 \phi] =$ $ \top \Rightarrow$ $ [M\models_T \phi] = \top$; and
\item[] $[M\models_3\phi] = \bot$ $\Rightarrow$ $[M\models_T \phi] = \bot$.
\end{enumerate*}
\end{lemma}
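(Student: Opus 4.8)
The plan is to prove a stronger, \emph{local} statement by structural induction on $\phi$ and then lift it to the model level. Recall that a completion $M'$ of $M$ shares the same state set $S$, transition relation $R$ and initial states $S_0$ with $M$; only the labelling changes, and by Definition~\ref{def:refinement} the completion labelling $L'$ satisfies $L(s,\alpha)\neq\, ? \implies L'(s,\alpha)=L(s,\alpha)$. In particular, every path $\pi$ of $M$ is a path of $M'$. I would introduce the \emph{information ordering} $\sqsubseteq$ on $\{\bot,?,\top\}$ in which $?$ is the least element and $\top,\bot$ are the two (incomparable) maximal elements, and establish the claim: for every LTL formula $\phi$, every path $\pi$ of $M$, and every completion $M'$ of $M$, one has $[(M,\pi)\models_3\phi]\sqsubseteq[(M',\pi)\models\phi]$, where the right-hand side is the classical two-valued semantics of $\phi$ along $\pi$ in the KS $M'$ (always a value in $\{\top,\bot\}$).

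The base case $\phi=p$ is exactly the constraint on $L'$ recalled above, since $[(M,\pi)\models_3 p]=L(s_0,p)$ and $[(M',\pi)\models p]=L'(s_0,p)$. For the inductive step I would use three facts: the operations appearing in the three-valued semantics, namely $comp$ and the $\min$/$\max$ taken with respect to the truth ordering $\bot<\,?<\top$, are all monotone with respect to $\sqsubseteq$ (a short, self-contained check on the three-element domain); these $\min$/$\max$ are well defined even over the infinitely many indices occurring in the $\LTLuntil$ clause, because the value domain is finite; and on the two-valued domain $\{\top,\bot\}$ these operations coincide with ordinary Boolean conjunction, disjunction and negation, which is precisely what the classical semantics uses. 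Monotonicity then propagates the inductive hypotheses through $\neg$, $\wedge$, $\LTLnext$ and $\LTLuntil$; the remaining LTL connectives are handled as the usual abbreviations.

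To finish, I would extend $[(M,\pi)\models_3\phi]\sqsubseteq[(M',\pi)\models\phi]$ to $[M\models_3\phi]\sqsubseteq[M'\models\phi]$ by taking minima (again monotone for $\sqsubseteq$) over the initial states and the paths emanating from them, using that $M$ and $M'$ have the same initial states and the same paths. Since $\top$ and $\bot$ are maximal for $\sqsubseteq$, $[M\models_3\phi]=\top$ forces $[M'\models\phi]=\top$, and $[M\models_3\phi]=\bot$ forces $[M'\models\phi]=\bot$, and this holds for \emph{every} completion $M'$ of $M$. By the definition of the thorough semantics this is exactly $[M\models_T\phi]=\top$ in the first case and $[M\models_T\phi]=\bot$ in the second, which is the claimed implication.

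I expect the only delicate point to be the $\LTLuntil$ case in the ``$\bot$'' direction. There one must argue that if the three-valued value of $\phi_1\LTLuntil\phi_2$ along $\pi$ is $\bot$, then $M'$ can satisfy $\phi_1\LTLuntil\phi_2$ along $\pi$ at no index $j$: unfolding the outer $\max$ and inner $\min$, a value of $\bot$ forces every inner term to be $\bot$, hence for each $j$ either $[(M,\pi^j)\models_3\phi_2]=\bot$ or $[(M,\pi^i)\models_3\phi_1]=\bot$ for some $i<j$, and the inductive hypothesis applied to that definite value rules out the corresponding witnessing index in $M'$; equivalently, this is just the $\sqsubseteq$-monotonicity of $\max$/$\min$ once that monotonicity is in hand. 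The remaining bookkeeping — agreement of the three-valued operators with their Boolean counterparts on $\{\top,\bot\}$ and definiteness of the two-valued semantics — is routine.
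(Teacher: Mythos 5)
Your proof is correct, but note that the paper itself does not prove this lemma: it is quoted as a known result of Godefroid and Piterman~\cite{godefroid2011ltl}, so there is no in-paper argument to compare against. Your route is the standard one and it goes through: since a completion $M'$ keeps $S$, $R$, $S_0$, $AP$ and only resolves the $?$ entries of $L$, the two structures have exactly the same paths, and the path-level claim $[(M,\pi)\models_3\phi]\sqsubseteq[(M',\pi)\models\phi]$ follows by structural induction because $comp$, $\min$ and $\max$ (taken in the truth ordering $\bot<\,?<\top$) are monotone with respect to the information ordering $\sqsubseteq$ in which $?$ is bottom and $\top,\bot$ are incomparable maximal elements; this monotonicity also holds for the possibly infinite $\min$/$\max$ in the until clause, since a definite value of such a $\min$ or $\max$ is witnessed (for $\max=\top$, resp.\ $\min=\bot$) or forced pointwise (for $\max=\bot$, resp.\ $\min=\top$), and definite values are preserved exactly under refinement. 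Lifting through the $\min$ over paths from the initial states (the same index set for $M$ and $M'$) and using maximality of $\top$ and $\bot$ in $\sqsubseteq$ gives that every completion agrees with a definite three-valued verdict, which is precisely the definition of $[M\models_T\phi]$. The only points worth making fully explicit in a written-up version are the finite case analysis establishing $\sqsubseteq$-monotonicity of the three Kleene operations, its extension to infinite families, and the (paper-implicit) convention that the model-level value is the minimum over initial states, all of which you have correctly identified as routine.
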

That is, a formula which is true (false) under the three-valued semantics is also true (false) under the thorough semantics.

There exists a subset of LTL formulae, known in the literature as \emph{self-minimizing}~\cite{godefroid2005model}, such that the two semantics coincide, i.e.,
given a model $M$ and a  self-minimizing LTL property $\phi$, then $[M\models_3\phi]=[M\models_T\phi]$. It has been observed that most practically useful LTL formulae belong to this subset~\cite{godefroid2005model}.

\vspace{2mm}
\noindent
\textbf{Model checking.}
Checking KSs with respect to LTL properties can be done by using classical model checking procedures.
For example, the model checking problem of property $\phi$ on a KS $M$ can be reduced to the satisfiability problem of the LTL formula $\Phi_M \LTLand \neg \phi$, where $\Phi_M$ represents the behaviors of model $M$.
If $\Phi_M \LTLand \neg \phi$ is satisfiable, then $[M \models \phi]=\bot$, otherwise $[M \models \phi]=\top$.

Checking a PKS $M$ with respect to an LTL property $\phi$ considering the three-valued semantics can be done by performing twice the classical model checking procedure for KSs~\cite{bruns2000generalized}, one considering an optimistic approximation $M_{opt}$ and one considering a pessimistic approximation $M_{pes}$.
These two procedures consider the LTL formula $\phi^\prime=\magicfunction(\phi)$, where \magicfunction\ transforms $\phi$ with the following steps:  
\begin{enumerate}
\item[(i)] negate $\phi$; 
\item[(ii)] convert  $\neg \phi$ in negation normal form\footnote{An LTL formula $\phi$ is in \emph{negation normal form} if negations are applied only to atomic propositions.
Conversion of an LTL formula into its negation normal form can be achieved by pushing negations inward and replacing them with their duals---for details see \cite{katoen2008}.};
\item[(iii)] replace every subformula $\neg \alpha$, where $\alpha$ is an atomic proposition, with a new atomic proposition $\overline{\alpha}$.
\end{enumerate}

To create the  optimistic  and pessimistic  approximations  $M_{opt}$ and  $M_{pes}$,
the PKS $M=\langle S, R,S_0,AP,L \rangle$ is first converted into its \emph{complement-closed} version $M_c=\langle S, R, S_0, AP_c, L_c \rangle$ where the set of atomic propositions $AP_c = AP \cup \overline{AP}$
is such that $\overline{AP}=\{\overline{\alpha} \mid \alpha \in AP \}$.
Atomic propositions in $\overline{AP}$ are  called complement-closed propositions.
Function $L_c$ is such that for all $s \in S$ and $\alpha \in AP$,  $L_c(s, \alpha)=L(s,\alpha)$
and for all $s\in S$ and $\overline{\alpha} \in \overline{AP}$, $L_c(s,\overline{p})=comp(L(s,p))$.
For example, the complement-closed PKS of the vacuum-cleaner agent in Fig.~\ref{fig:motivatingmodel}, in the state $IDLE$ presents eight propositional assignments: $move=	\bot$, $\overline{move}=\top$, $suck=\bot$, $\overline{suck}=\top$, $on=\top$, $\overline{on}=\bot$, $reached=?$, and $\overline{reached}=?$.

The two model checking runs for a PKS $M=\langle S, R,S_0,AP,L \rangle$ are based respectively on an optimistic ($M_{opt}=\langle S, R,S_0,AP_c,L_{opt} \rangle$) and a pessimistic ($M_{pes}=\langle S, R,S_0,AP_c,L_{pes} \rangle$) approximation of $M$'s relative complement-closed $M_c=\langle S, R, S_0, AP_c, L_c \rangle$.
Function $L_{pes}$ (resp. $L_{opt}$) is such that for all $s \in S$, $\alpha \in AP_c$, and $L_c(s, \alpha) \in \{\top, \bot \}$, then $L_{pes}(s, \alpha)=L_c(s,\alpha)$ (resp. $L_{opt}(s, \alpha)=L_c(s,\alpha)$),
and for all $s\in S$,  $\alpha \in AP_c$, and $L_c(s, \alpha)=?$, then $L_{pes}(s,\alpha)=\bot$ (resp. $L_{opt}(s,\alpha)=\top$).

Let $A$ be a KS and $\phi$ be an LTL formula, $A$ $\models^\ast\phi$ is true if no path that satisfies the formula $\magicfunction (\phi)$ is present in $A$.

\begin{theorem}[\cite{bruns1999model}]
\label{th:threevaluedMC}
Let $\phi$ be an LTL formula, 
let $M=\langle S, R, S_0, AP, L\rangle $ be a PKS,
and let $M_{pes}$ and $M_{opt}$ be the pessimistic and optimistic approximations of $M$'s relative complement-closed $M_c$. Then
\begin{equation}
    [M \models_3\phi]\buildrel \text{def}\over = 
                						 \begin{cases}
                  				 				\top & \quad \text{if}\ M_{pes}\models^\ast\phi\\
                  								\bot & \quad \text{if}\ M_{opt}\not\models^\ast\phi\\
                  								? & \quad  otherwise
                						\end{cases}
\end{equation}
\end{theorem}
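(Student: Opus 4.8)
The plan is to prove Theorem~\ref{th:threevaluedMC} by reducing the three-valued model checking problem on $M$ to two ordinary (two-valued) model checking problems on the Kripke Structures $M_{pes}$ and $M_{opt}$, exploiting the monotonicity of the three-valued semantics with respect to the information ordering $\bot <\, ? <\top$. First I would establish the key monotonicity lemma: if $M_1 \preceq M_2$ (i.e., $M_2$ refines $M_1$), then for every path $\pi$ and every LTL formula $\phi$ we have $[(M_1,\pi)\models_3\phi] \leq [(M_2,\pi)\models_3\phi]$ in the information ordering. This is proved by structural induction on $\phi$: the base case $p$ follows since $L_1(s_0,p)\leq L_2(s_0,p)$ by the definition of refinement; the $\neg$ case uses that $comp$ is order-reversing but the negation-normal-form handling (or the direct three-valued semantics with $comp$) keeps things consistent; the $\wedge$, $\LTLnext$, and $\LTLuntil$ cases follow because $\min$, $\max$ and the path-shift operator are all monotone. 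Extending from paths to states preserves the inequality since $\min$ over a set of paths is monotone.

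Next I would connect $M_{pes}$ and $M_{opt}$ to the completions of $M$. The crucial observation is that $M_{pes}$ and $M_{opt}$ are, up to the complement-closed encoding, the two extreme completions of $M_c$: $M_{pes}$ sets every $?$ to $\bot$ and $M_{opt}$ sets every $?$ to $\top$, so for any completion $M'$ of $M$ we have $M_{pes} \preceq M'_c \preceq M_{opt}$ (where $M'_c$ is the complement-closed version of $M'$; note that on $M_c$ the complement-closed propositions $\overline{\alpha}$ are treated as independent, which is exactly why these two approximations bracket all genuine completions). Here one must be careful that the semantics $\models^\ast$ used in the statement refers to the $\magicfunction$-transformed formula over the complement-closed alphabet, so I would phrase the monotonicity lemma directly in terms of $\models^\ast$ and the complement-closed structures, observing that $\magicfunction(\phi)$ is in negation normal form with negated atoms replaced by fresh positive atoms $\overline{\alpha}$, hence $\magicfunction(\phi)$ is a \emph{positive} formula and is therefore monotone in the labelling. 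Applying the monotonicity lemma: $M_{pes}\models^\ast\phi$ means no path of $M_{pes}$ satisfies $\magicfunction(\phi)$; since any completion $M'$ has $M'_c$ with labelling pointwise $\geq$ that of $M_{pes}$, and since satisfaction of the positive formula $\magicfunction(\phi)$ is monotone, ``no counterexample in $M_{pes}$'' does \emph{not} immediately transfer — rather it is the \emph{optimistic} direction that transfers one way and the pessimistic the other. I would therefore split into the two cases of the theorem and argue each.

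For the $\top$ case: if $M_{pes}\models^\ast\phi$, then $\phi$ holds in the pessimistic approximation; by monotonicity every completion $M'$ (whose complement-closed labelling dominates $M_{pes}$'s) also has no path satisfying $\magicfunction(\phi)$ — wait, this requires the \emph{opposite} monotonicity direction, so the correct statement is: $M_{pes}$ is the pessimistic (least-labelled on the original atoms, but the $\overline{\alpha}$'s are set to make counterexamples hardest/easiest appropriately) — I would carefully track that $M_{pes}$ is designed so that $M_{pes}\models^\ast\phi \Rightarrow [M\models_3\phi]=\top$, using that $[(M_{pes},s)\models_3\phi]$ is a lower bound for $[(M',s)\models_3\phi]$ over all completions, which together with the characterization $[M_{pes}\models^\ast\phi] \iff [M_{pes}\models_3 \phi]=\top$ (by Definition of $\magicfunction$ and the fact that $M_{pes}$ is a KS, so $\models_3$ collapses to $\models$) gives that every completion satisfies $\phi$. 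Symmetrically, for the $\bot$ case: $M_{opt}\not\models^\ast\phi$ means $M_{opt}$ has a path satisfying $\magicfunction(\phi)$, i.e. $[M_{opt}\models\phi]=\bot$; since $[(M_{opt},s)\models_3\phi]$ is an \emph{upper} bound over completions, every completion has value $\bot$, so $[M\models_3\phi]=\bot$. The ``otherwise'' case follows by exhaustion.

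The main obstacle I expect is getting the direction of monotonicity exactly right for the complement-closed encoding and the $\magicfunction$ transformation: one has to verify that replacing $\neg\alpha$ by an independent proposition $\overline{\alpha}$ with $L_c(s,\overline{\alpha})=comp(L(s,\alpha))$, and then taking $L_{pes}$ resp. $L_{opt}$ to push $?$ to $\bot$ resp. $\top$ \emph{on all of $AP_c$}, correctly makes $M_{pes}$ a true lower approximation and $M_{opt}$ a true upper approximation of the three-valued value of $\phi$, rather than of $\neg\phi$ or of $\magicfunction(\phi)$. In particular, since $M_{pes}$ sets $\overline{\alpha}$ to $\bot$ wherever $L(s,\alpha)=?$, and $M_{opt}$ sets it to $\top$, one must confirm that these choices are \emph{consistent} in the sense that for each genuine completion exactly one of $\alpha,\overline{\alpha}$ is true, and that the pessimistic/optimistic structures, while not themselves genuine completions, still sandwich all genuine completions in the information order relative to the positive formula $\magicfunction(\phi)$ — this is the delicate bookkeeping step. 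Once that is pinned down, the rest is a routine induction plus the observation that on a KS the three-valued and two-valued semantics agree (stated already in the excerpt).
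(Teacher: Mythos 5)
The paper never proves this theorem---it is quoted verbatim from Bruns and Godefroid---so your attempt has to stand on its own, and as sketched it does not. The decisive gap is that your reduction runs through \emph{completions}: for the $\top$ case you argue that every completion $M'$ of $M$ satisfies $\phi$ and from this conclude $[M\models_3\phi]=\top$. That last inference is invalid: ``every completion satisfies $\phi$'' is exactly $[M\models_T\phi]=\top$, the thorough semantics, and the implication between the two semantics goes the other way (the paper's own lemma states only $[M\models_3\phi]=\top\Rightarrow[M\models_T\phi]=\top$). They genuinely differ: for $\phi=p\lor\neg p$ and $L(s,p)=?$ every completion satisfies $\phi$, yet $[M\models_3\phi]=?$. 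The same objection applies symmetrically to your $\bot$ case. Since $[M\models_3\phi]$ is defined path-wise on $M$ itself, a correct proof never needs completions at all: the key lemma is that for the negation-free formula $\magicfunction(\phi)$ over $AP_c$ and any path $\pi$, the three-valued value $[(M_c,\pi)\models_3\magicfunction(\phi)]$ is $\top$ exactly when the two-valued evaluation on $(M_{pes},\pi)$ is true and is $\bot$ exactly when the two-valued evaluation on $(M_{opt},\pi)$ is false; combining this with $[(M,\pi)\models_3\phi]$ being the complement of $[(M_c,\pi)\models_3\magicfunction(\phi)]$ and then quantifying over the paths from initial states gives the result directly.

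Your auxiliary lemmas are also not right as stated. Monotonicity of $\models_3$ under refinement with respect to the ordering $\bot<\,?<\top$ is false (a $?$ may be refined to $\bot$, which \emph{decreases} the value, and negation reverses this order, so the structural induction breaks at both the base case and the $\neg$ case); what holds is monotonicity in the knowledge order, or truth-order monotonicity restricted to negation-free formulas under pointwise-ordered labellings---and note that $M_{pes}$ and $M_{opt}$ are not completions of anything related to $M$ by $\preceq$, since they assign $\alpha$ and $\overline{\alpha}$ independently, so your sandwich $M_{pes}\preceq M'_c\preceq M_{opt}$ only makes sense in that restricted, positive-formula form. Similarly, your claimed equivalence between $M_{pes}\models^\ast\phi$ and $M_{pes}\models\phi$ fails: because $\alpha$ and $\overline{\alpha}$ can both be false in $M_{pes}$, ``no path satisfies $\magicfunction(\phi)$'' does not entail ``every path satisfies $\phi$''. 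Finally, the point where you wrote that the argument ``requires the opposite monotonicity direction'' and then simply asserted that $M_{pes}$ ``is designed so that'' $M_{pes}\models^\ast\phi$ yields $\top$ is not delicate bookkeeping---it is the crux. If you push the path-wise argument through with the literal definitions given here (unknowns of \emph{both} $\alpha$ and $\overline{\alpha}$ sent to $\bot$ in $M_{pes}$ and to $\top$ in $M_{opt}$, with $\magicfunction(\phi)$ encoding $\neg\phi$), counterexample-freedom of $M_{pes}$ only certifies $[M\models_3\phi]\neq\bot$, while the certificate for $\top$ comes from counterexample-freedom of the \emph{optimistic} structure; in Bruns and Godefroid's original formulation the mismatch disappears because the positive form of $\phi$ itself is checked universally on the approximations. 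Your proof must confront and resolve this direction issue explicitly rather than assume it away.
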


We assume that the function $\checkalg^\ast$ computes the result of operator $\models^\ast$. It takes as input either $M_{pes}$ or $M_{opt}$ and the property $\magicfunction(\phi)$, and returns a tuple $\langle res, c \rangle$,
where $res$ is the model checking result in $\{ \top, \bot \}$, and $c$ can be an empty set (when $M$ satisfies $\phi$), a \textit{definitive}-counterexample (when $M$ violates $\phi$), or a \textit{possible}-counterexample (when $M$ possibly-satisfies $\phi$).

\subsection{Unsatisfiable core}
\label{sub:UC}

Given a set of atomic propositions $AP$, $AP^{L}$ is the minimum set of elements such that for all $ p \in AP$, the following  holds: $p \in AP^{L}$ and $(\neg p) \in AP^{L}$.

\begin{definition}[\cite{SCHUPPAN2016155}]
Let $AP$ be a set of atomic propositions.
A Separated Normal Form (SNF) clause is an LTL formula, which is either an SNF initial clause, an SNF global clause or an SNF eventuality clause, where:
\begin{itemize}
	\item an \textit{SNF initial clause} $\underset{p \in P}{\bigvee}p$ where $P \subseteq AP^L$;
	\item an \textit{SNF global clause} $\LTLg(\underset{p \in P}{\bigvee}p \lor \LTLx(\underset{q \in Q}{\bigvee}q))$ where $P,Q \subseteq AP^L$;
	\item an \textit{SNF eventuality clause} $\LTLg(\underset{p \in P}{\bigvee}p \lor \LTLf(l))$ where $P \subseteq AP^L, l \in AP^L$.
\end{itemize}
\end{definition}
\noindent
Given a set $C$ of LTL formulae we denote the LTL formula $\underset{c \in C}\bigwedge c$ with $\eta(C)$.

\begin{definition}[\cite{SCHUPPAN2016155}]
Let $C$ be a set of SNF clauses. 
Then the formula $\eta(C)$ is in SNF.
\end{definition}

In the following we assume that the property $\phi$ is an LTL formula in SNF since any LTL formula can be transformed in an equally satisfiable SNF formula (for example, by using the procedure in~\cite{fisher2001clausal}). 

\begin{definition}[\cite{SCHUPPAN2016155}]
Let $C$ be a set of SNF clauses. 
Then $C$ is \emph{unsatisfiable} if the corresponding SNF formula $\eta(C)$ is unsatisfiable.
\end{definition}

\begin{definition}[\cite{SCHUPPAN2016155}]
Let $C$ be an unsatisfiable set of SNF clauses and 
let $C^{uc}$ be an unsatisfiable subset of $C$. 
Then
$C^{uc}$ is an \emph{Unsatisfiable Core (UC)} of $C$.
\end{definition}

\section{Revising and refining models}
\label{sec:topologicalproof}
First, we define how models can be revised and refined.
Then, we define the notion of \emph{topological proof}, that is used to describe why a property $\phi$ is satisfied in a KS $M$. 
Furthermore, we describe how the defined notion of proof can be exploited to show why a property is satisfied or possibly satisfied in a PKS.

\vspace{2mm}
\noindent
\textbf{Revisions and refinements.}
During a revision, a designer can add and remove states and transitions or change the labeling of the atomic propositions in some of the states of the structure.

\begin{definition}
\label{def:revision}
Let $M=\langle S,$ $R,$ $S_0,$ $AP,$ $L \rangle$ and 
$M^\prime=\langle S^\prime,$ $R^\prime,$ $S^\prime_0,$ $AP^\prime,$ $L^\prime \rangle$ be two PKSs.
Then $M^\prime$ is a \emph{revision} of $M$ if and only if $AP \subseteq AP^\prime$.
\end{definition}
Informally, the only constraint the designer has to respect during a revision is not to remove propositions from the set of atomic propositions.

\begin{restatable}{lemma}{refinementisrevision}
\label{refinementisrevision}
Let $M=\langle S,$ $R,$ $S_0,$ $AP,$ $L \rangle$ be a PKS and   
let $M^\prime=\langle S,$ $R,$ $S_0,$ $AP,$ $L^\prime\rangle$ be a refinement of $M$.
Then $M^\prime$ is a revision of $M$.
\end{restatable}

\vspace{2mm}
\noindent
\textbf{Topological proofs.}
The pursued proof is made of a set of clauses specifying certain topological properties of $M$, which ensure that the property is satisfied.

\begin{definition}
\label{def:ksclause}
Let  $M=\langle S,$ $R,$ $S_0,$ $AP,$ $L \rangle$ be a  PKS. 
A \emph{topological proof clause} (TP-clause) $\gamma$ for $M$ is either:
\begin{itemize}
\item  a \textit{topological proof propositional  clause} (TPP-clause), i.e., a triad $\langle s, \alpha, v \rangle$ where $s \in S$, $\alpha \in AP$, and $v \in \{ \LTLtrue, ?, \LTLfalse \}$; 
\item a \textit{topological proof transitions-from-state clause} (TPT-clause), i.e., an element $\langle s, T \rangle$, such that $s \in S, T \subseteq S$;
\item a \textit{topological proof initial-states clause} (TPI-clause), i.e., an element $\langle S_0 \rangle$.
\end{itemize}
\end{definition}

These clauses indicate \textit{topological properties} of a PKS $M$.
Informally, TPP-, TPT-, and TPI-clauses describe how states are labeled, how states are connected, and from which states the runs on the model begin, respectively.

\begin{definition}
\label{def:gammarelated}
Let $M=\langle S, R, S_0, AP, L \rangle$ be  a PKS and 
let $\Gamma$  be a set of TP-clauses for $M$. 
Then a \emph{$\Gamma$-related} PKS is a PKS $M^\prime=\langle S^\prime, R^\prime, S^\prime_0, AP^\prime, L^\prime \rangle$, such that the following conditions hold:
\begin{itemize}
\item $AP \subseteq AP^\prime$;
\item for every TPP-clause $\langle s, \alpha, v \rangle \in \Gamma$,  $v=L^\prime(s, \alpha)$;
\item for every TPT-clause $\langle s, T \rangle  \in \Gamma$, $T=\{s^\prime \in S^\prime | (s,s^\prime)\in\ R^\prime\}$;
\item for every TPI-clause $\langle S_0 \rangle \in \Gamma$, $S_0 = S^\prime_0$.
\end{itemize}
\end{definition}

Intuitively, a $\Gamma$-related PKS of $M$ is a PKS obtained from $M$ by changing any topological aspect that does not impact on the set of TP-clauses $\Gamma$.
Any transition whose source state is not the source state of a transition included in the TPT-clauses can be added or removed from the PKS and any value of a proposition that is not constrained by a TPP-clause can be changed.
States can be always added and they can be removed if they do not appear in any TPT-, TPP-, or TPI-clause. Initial states cannot be changed if $\Gamma$ contains a TPI-clause.

\begin{definition}
\label{def:topologicalproof}
Let $M=\langle S, R, S_0, AP, L \rangle$ be a PKS, 
let $\phi$ be an LTL property, 
let $\Omega$ be a set of TP-clauses, 
and let $x$ be a truth value in $\{\top,?\}$.
A set of TP-clauses $\Omega$ is an \emph{$x$-topological proof} (or $x$-TP) for $\phi$ in $M$ if  
$[M \models \phi] = x$ and
every $\Omega$-related PKS $M^\prime$ is such that $[M^\prime \models \phi]\geq x$.

\end{definition}

Note that, in this definition---and in the rest of the paper, when not differently specified---$\models$ indicates either $\models_3$ or $\models_T$.

Intuitively, an \emph{$x$-topological proof} is a set $\Omega$ such that every PKS $M^\prime$ that satisfies the conditions specified in Definition~\ref{def:gammarelated} is such that $[M^\prime \models \phi]\geq x$.
We call $\top$-TP a \emph{definitive-topological proof} and
$?$-TP a \emph{possible-topological proof}.
In Definition~\ref{def:topologicalproof} the operator $\geq$, assumes that values $\top, ?, \bot$ are ordered considering the classical information ordering $\top >\, ? > \bot$ among the truth values~\cite{bruns1999model}.

A $?$-TP for the PKS in Fig.~\ref{fig:motivatingmodel} and property $\phi_4$ is composed by the TP-clauses shown in Table~\ref{tab:motivatingExampleProof}.

\begin{table}[t]

\begin{tabular}{lc}
	\toprule
	\textbf{Proof generated for property $\phi_4$   } & \textbf{Clause}\\
	\midrule
	$\langle \mathit{OFF}, \mathit{suck}, \bot \rangle$,
	$\langle \mathit{IDLE}, \mathit{suck}, \bot \rangle$,
	$\langle \mathit{MOVING}, \mathit{suck},$ $?$ $\rangle$,
	$\langle \mathit{MOVING}, \mathit{move}, \top \rangle$  & TPP \\
	$\langle \mathit{OFF},\{\mathit{OFF}, \mathit{IDLE}\}  \rangle$,
	$\langle \mathit{IDLE},\{\mathit{OFF}, \mathit{IDLE}, \mathit{MOVING}\} \rangle$ & TPT \\
	$\langle \{\mathit{OFF}\} \rangle$ & TPI \\
	\bottomrule
\end{tabular}

 	\caption{An example of proof for the vacuum-cleaner example.}
\label{tab:motivatingExampleProof}
\vspace{-4mm}
\end{table}

\begin{definition}
	\label{def:omegaRevisionDef}
	Let $M$ and $M^\prime$ be two PKSs, 
	let $\phi$ be an LTL property, and 
	let $\Omega$ be an $x$-TP.
	Then $M^\prime$ is an \emph{$\Omega_x$-revision} of $M$ if 
	$M^\prime$ is $\Omega$-related to $M$.
\end{definition}
Intuitively, since the \emph{$\Omega_x$-revision} $M^\prime$ of $M$ is such that $M^\prime$ is $\Omega$-related w.r.t. $M$, it is obtained by changing the model $M$ while preserving the statements that are specified in the $x$-TP.
A revision $M^\prime$ of $M$ is \emph{compliant} with the $x$-TP for a property $\phi$ in $M$ if it is an \emph{$\Omega_x$-revision} of $M$. 

\begin{restatable}{lemma}{revisingpreservespropertysatisfaction}
\label{revisingpreservespropertysatisfaction}
Let $M$ be a PKS,
let $\phi$ be an LTL property such that $[M \models \phi]=\top$, and 
let $\Omega$ be a $\top$-TP. 
Then every $\Omega_\top$-revision $M^\prime$ is such that $[M^\prime \models \phi] =\top$.

\noindent
Let $M$ be a PKS,
let $\phi$ be an LTL property such that $[M \models \phi]=?$, and
let $\Omega$ be an $?$-TP. 
Then every $\Omega_?$-revision $M^\prime$ is such that $[M^\prime \models \phi] \in \{ \top, ?\}$.
\end {restatable}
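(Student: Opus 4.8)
The plan is to unfold two definitions and then read off the conclusion from the information ordering $\top > \,? > \bot$; no new machinery is needed, since the lemma is essentially a restatement, in the form convenient for the \textsc{re-check} phase, of the guarantee already built into Definition~\ref{def:topologicalproof}.

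First I would handle the case $x=\top$. Let $M^\prime$ be an arbitrary $\Omega_\top$-revision of $M$. By Definition~\ref{def:omegaRevisionDef}, this means exactly that $M^\prime$ is $\Omega$-related to $M$ in the sense of Definition~\ref{def:gammarelated}. Since $\Omega$ is a $\top$-TP for $\phi$ in $M$, the hypothesis $[M \models \phi] = \top$ holds (matching the first conjunct of Definition~\ref{def:topologicalproof}, so the assumption is not vacuous), and the second conjunct gives $[M^\prime \models \phi] \geq \top$. Because $\top$ is the greatest element of the information order, this forces $[M^\prime \models \phi] = \top$, which is the claim.

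Next I would handle $x=\,?$ identically. Let $M^\prime$ be an arbitrary $\Omega_?$-revision of $M$; by Definition~\ref{def:omegaRevisionDef} it is $\Omega$-related to $M$, and since $\Omega$ is a $?$-TP (so $[M \models \phi] = \,?$), Definition~\ref{def:topologicalproof} yields $[M^\prime \models \phi] \geq\, ?$. The only truth values that are $\geq\, ?$ in the order $\top >\, ? > \bot$ are $?$ and $\top$, hence $[M^\prime \models \phi] \in \{\top, ?\}$.

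The one point I would make explicit, rather than an actual obstacle, is that the statement is intended uniformly for both $\models_3$ and $\models_T$ (per the remark following Definition~\ref{def:topologicalproof}): since Definitions~\ref{def:topologicalproof} and~\ref{def:omegaRevisionDef} are both phrased with the generic $\models$, the argument above applies verbatim under either semantics, and the two semantics never need to interact. Beyond that, the proof is pure bookkeeping, so I expect no real difficulty; the only care required is to confirm that the hypothesis $[M \models \phi] = x$ of the lemma coincides with the first requirement in the definition of an $x$-TP, which it does.
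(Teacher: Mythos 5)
Your proof is correct and follows essentially the same route as the paper's: unfold Definition~\ref{def:omegaRevisionDef} to see that an $\Omega_x$-revision is $\Omega$-related to $M$, then apply the second clause of Definition~\ref{def:topologicalproof} to get $[M^\prime \models \phi] \geq x$ and conclude via the ordering $\top >\, ? > \bot$. If anything, your phrasing is slightly more direct than the paper's (which says $\Omega$ ``is a $\top$-TP for $\phi$ in $M^\prime$'' before invoking the definition), but the argument is the same.
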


\begin{proof}
We prove the first statement of the Lemma; the proof of the second statement is obtained by following the same steps.
If $\Omega$ is a $\top$-TP, 
it is a $\top$-TP for $\phi$ in $M^\prime$, 
since $M^\prime$ is an $\Omega_\top$-revision of $M$ (by Definition~\ref{def:omegaRevisionDef}).
Since $\Omega$ is a $\top$-TP for $\phi$ in $M^\prime$, then $[M^\prime \models \phi] \geq \top$ (by Definition~\ref{def:topologicalproof}).
\qed	
\end{proof}

\section{\NAME\ automated support}
\label{sec:automatedsupport}

This section describes the algorithms that support  the \textsc{analysis} and \textsc{re-check} phases of  \NAME .

\begin{algorithm}[t]
\captionof{algorithm}{The algorithm that supports the \textsc{analysis} phase.} 
\label{alg:analyze} 
\begin{algorithmic}[1] 
\Function{\analyze}{$M$, $\phi$}
	\State $\langle res, c \rangle$ = \checkalg$^\ast$($M_{opt} $, $\phi)$
	\label{line:check_opt}
	\If{$res == \bot$} \Return $\langle \bot, \{ c \} \rangle$
	\label{line:not_satisfied}
	\Else
	\State $\langle res^\prime, c^\prime \rangle$ = \checkalg$^\ast$($M_{pes} $, $\phi$)
	\label{line:check_pes}
		\If{$res^\prime ==\top$} \Return  $\langle \top, \{ \tpcompute(M, M_{pes} , \magicfunction(\phi)) \} \rangle$
		\label{line:satisfied}
		\Else
			\State \Return $\langle ?, \{ c^\prime, 	\tpcompute(M, M_{opt} $, $\magicfunction(\phi))\} \rangle$
				\label{line:possiblysatisfied}
		\EndIf
	\EndIf
\EndFunction
\end{algorithmic}
\end{algorithm}

\textbf{\textsc{Analysis.}} 
To analyze a PKS $M=\langle S, R,S_0,AP,L \rangle$, \NAME\ uses the three-valued model checking framework based on Theorem~\ref{th:threevaluedMC}.
The model checking result is provided as output by the \textsc{analysis} phase of \NAME , whose behavior is described in Algorithm~\ref{alg:analyze}.
Specifically, the algorithm returns a tuple $\langle x, y \rangle$, where $x$ is the verification result and $y$ is a set containing the counterexample, the topological proof or both of them.
The algorithm first checks whether the optimistic approximation $M_{opt}$ of the PKS $M$ satisfies property $\phi$ (Line~\ref{line:check_opt}).
If this is not the case, the property is violated by the PKS and the definitive-counterexample $c$ is returned (Line~\ref{line:not_satisfied}).
Then, it checks whether the pessimistic approximation $M_{pes}$ of the PKS $M$ satisfies property $\phi$ (Line~\ref{line:check_pes}).
If this is the case, the property is satisfied and the value $\top$ is returned along with the definitive-topological proof ($\top$-TP) computed by the \tpcompute\ procedure applied on the pessimistic approximation $M_{pes}$ and the property $\magicfunction(\phi)$.
If this is not the case, the property is possibly satisfied and the value $?$ is returned along with the possible-counterexample $c'$ and the possible-topological proof ($?$-TP) computed by the \tpcompute\ procedure applied to $M_{opt}$ and $\magicfunction(\phi)$.

\begin{algorithm}[t]
\captionof{algorithm}{Compute Topological Proofs} 
\label{alg:computetpp} 
\begin{algorithmic}[1] 
\Function{\tpcompute}{$M$, $\mathcal{A}$, $\psi$}
	\State $\eta(C)= \tosnf (\mathcal{A}, \psi)$ \label{step1}
	\State $C^{uc} =\getuc (\eta(C))$ \label{step2}
	\State $TP= \gettp (M, C^{uc})$ \label{step3}
	\State \Return $TP$
\EndFunction
\end{algorithmic}
\end{algorithm}

The procedure \tpcompute\ (Compute Topological Proofs) to compute $x$-TPs is described in Algorithm~\ref{alg:computetpp}. 
It takes as input a PKS $M$, its optimistic/pessimistic approximation, i.e., the KS $\mathcal{A}$, and an LTL formula $\psi$---satisfied in $\mathcal{A}$--- corresponding to the transformed property \magicfunction($\phi$). The three steps are described in the following.

\tosnf . \textit{Encoding of the KS A and the LTL $\psi$ formula into an LTL formula in SNF $\eta(C)$.}
The KS $\mathcal{A}$ and the LTL  formula $\psi$ are used to generate an SNF formula 
$\eta(C_{\mathcal{A}} \cup C_{\psi})$, where $C_{\mathcal{A}}$ and $C_{\psi}$ are sets of SNF clauses obtained from the KS $\mathcal{A}$ and the LTL formula $\psi$. 
The clauses in  $C_{\psi}$ are computed from $\psi$ as specified in~\cite{SCHUPPAN2016155}.
The set of clauses that encodes the KS is $C_{\mathcal{A}}=C_{\mathit{KS}} \cup C_{\mathit{REG}}$, where $C_{\mathit{KS}}=\{ c_i \}  \cup CR_{\mathcal{A}} \cup CL_{\top,\mathcal{A}} \cup CL_{\bot,\mathcal{A}}$
and $c_i$, $CR_{\mathcal{A}}$, $CL_{\top,\mathcal{A}}$ and $CL_{\bot,\mathcal{A}}$
 are defined as specified in Table~\ref{tab:kstosnf}.
Note that the clauses in $C_{\mathcal{A}}$ are defined on the set of atomic propositions $AP_S=AP_{\mathcal{A}} \cup \{ p(s) | s \in S_{\mathcal{A}}\}$, i.e., $AP_S$ includes an additional atomic proposition $p(s)$ for each state $s$, which is true when the KS is in state $s$.

\begin{table}[t]
\caption{Rules to transform the KS in SNF formulae.}
\label{tab:kstosnf}
\begin{tabular}{l}
\toprule
$c_i=\underset{s \in S_{0,\mathcal{A}}}{\bigvee} p(s)$ \\ The KS is initially  in one of its initial states.\\
\midrule
$CR_{\mathcal{A}}=\{\LTLg(\neg p(s) \lor\LTLx( \underset{(s, s^\prime) \in R_{\mathcal{A}}}{\bigvee} p(s^\prime)) ) \mid s \in S_{\mathcal{A}} \}$ \\ If the KS is in state $s$ in the current time instant, in the next time instant it is in \\ one of its successors $s^\prime$ of $s$.\\
\midrule
$CL_{\top,\mathcal{A}}=\{\LTLg(\neg p(s) \lor \alpha ) \mid s \in S_{\mathcal{A}}, \alpha \in AP_{\mathcal{A}}, L_{\mathcal{A}}(s, \alpha)=\top \}$ \\
If the KS is in state $s$ s.t. $L_{\mathcal{A}}(s, \alpha)=\top$, the atomic proposition $\alpha$ is true.\\
\midrule
$CL_{\bot,\mathcal{A}}=\{\LTLg(\neg p(s) \lor \neg \alpha ) \mid s \in S_{\mathcal{A}}, \alpha \in AP_{\mathcal{A}}, L_{\mathcal{A}}(s, \alpha)=\bot \}.$ \\ If the KS is in state $s$ s.t. $L_{\mathcal{A}}(s, \alpha)=\bot$, the atomic proposition $\alpha$ is false.\\
\midrule
$C_{\mathit{REG}}=\{\LTLg(\neg p(s) \lor \neg p(s^\prime)) \mid s,s^\prime \in S_{\mathcal{A}} \text{ and } s\neq s^\prime\}$ \\ Ensures that the KS is in at most one  state at any time.\\
\toprule
\end{tabular}
\vspace{-4mm}
\end{table}

\getuc . \textit{Computation of the unsatisfiable core (UC) $C^{uc}$ of $C$.}
Since the property $\psi$ is satisfied on $\mathcal{A}$, as recalled in Section~\ref{sec:background}, $\eta(C_{\mathcal{A}} \cup C_{\psi})$ is unsatisfiable and the computation of its UC core is performed as specified in~\cite{SCHUPPAN2016155}.
The procedure returns an SNF formula $\eta(C_{\mathit{KS}}^\prime \cup C_{\mathit{REG}}^\prime \cup C^\prime_{\psi})$ that is unsatifiable and such that $C_{\mathit{KS}}^\prime \subseteq C_{\mathit{KS}}$,
$C_{\mathit{REG}}^\prime \subseteq C_{\mathit{REG}}$ and $C_{\psi}^\prime \subseteq C_{\psi}$.

\gettp . \textit{Analysis of the UC $C^{uc}$ and extraction of the topological proof.}
Formula $\eta(C_{\mathcal{A}}^\prime \cup C^\prime_{\psi})$, where $C_{\mathcal{A}}^\prime=C_{\mathit{KS}}^\prime \cup C_{\mathit{REG}}^\prime$,
 contains clauses regarding the KS ($C^\prime_{\mathit{KS}}$), the fact that the model is a KS ($C^\prime_{\mathit{REG}}$), and the property of interest ($C_{\psi}^\prime$) that made the formula $\eta(C_{\mathcal{A}} \cup C_{\psi})$ unsatisfiable. 
Since we are interested in clauses related to the KS that caused unsatisfiability, we extract the topological proof $\Omega$, whose topological proof clauses are obtained from the clauses in $C_{\mathit{KS}}^\prime$ as specified in Table~\ref{tab:snftotp}.
Since the set of atomic propositions of $\mathcal{A}$ is $AP_{\mathcal{A}}= AP \cup \overline{AP}$, in the table we use $\alpha$ for propositions in $AP$ and $\overline{\alpha}$ for propositions in $\overline{AP}$.

\begin{table}[t]
\caption{Rules to extract the TP-clauses from the UC SNF formula.}
\label{tab:snftotp}
\begin{tabular}{ p{0.4\textwidth} p{0.4\textwidth} p{0.2\textwidth}}
\toprule
\textbf{SNF clause} & \textbf{TP clause} & \textbf{TP clause type} \\
\toprule
$c_i=\underset{s \in S_{0,\mathcal{A}}}{\bigvee} p(s)$ & $\langle S_0 \rangle$ & TPI-clause \\
\midrule
$\LTLg(\neg p(s) \lor \LTLx(\underset{(s, s^\prime) \in R_{\mathcal{A}}}{\bigvee} p(s^\prime) )$ & $\langle s, T \rangle$ where $T=\{s^\prime | (s,s^\prime) \in R \}$  & TPT-clause \\
\midrule
$\LTLg(\neg p(s) \lor \alpha )$ & $\langle  s, \alpha, L(s, \alpha) \rangle$  & TPP-clause \\
\midrule
$\LTLg(\neg p(s) \lor \neg \alpha )$ & $\langle  s, \alpha, comp(L(s, \alpha)) \rangle$  & TPP-clause \\
\midrule
$\LTLg(\neg p(s) \lor \overline\alpha )$ & $\langle  s, \alpha, comp(L(s, \alpha)) \rangle$ & TPP-clause \\
\midrule
$\LTLg(\neg p(s) \lor \neg \overline\alpha )$ & $\langle  s, \alpha, L(s, \alpha) \rangle$ & TPP-clause \\
\toprule
\end{tabular}
\vspace{-4mm}
\end{table}

Note that elements in $C_{\mathit{REG}}^\prime$ are not considered  in the TP computation.
Indeed, given an SNF clause $\LTLg(\neg p(s) \lor \neg p(s^{\prime}))$, either state $s$ or $s^\prime$ is included in other SNF clauses, thus it will be mapped on TP-clauses that will be preserved in the model revisions.

	\begin{lemma}
				\label{lemma:creg}
Let $\mathcal{A}$ be a KS
and let $\psi$ be an LTL property.
Let also $\eta(C_{\mathcal{A}} \cup C_{\psi})$ be the SNF formula computed in the step \tosnf of the algorithm, where $C_{\mathcal{A}}=C_{\mathit{REG}}\cup C_{\mathit{KS}}$, and 
let $C_{\mathcal{A}}^\prime \cup  C_{\psi}^\prime$ be an unsatisfiable core,  where $C_{\mathcal{A}}^\prime=C_{\mathit{REG}}^\prime\cup C_{\mathit{KS}}^\prime$.
				Then, if $\LTLg(\neg p(s) \lor \neg p(s^{\prime})) \in C^\prime_{\mathit{REG}}$,
				either
				\begin{enumerate}
				\item[(i)] there exists an SNF clause in $C_{\mathit{KS}}^\prime$  that predicates on state $s$ (or on state $s^{\prime}$);
				\item[(ii)] $C_{\mathcal{A}}^{\prime\prime}  \cup  C_{\psi}^\prime$, s.t. $C_{\mathcal{A}}^{\prime\prime}=C_{\mathcal{A}}^{\prime} \setminus \{ \LTLg(\neg p(s) \lor \neg p(s^{\prime}))  \}$, is  an unsatisfiable core of $\eta(C^\prime_{\mathcal{A}} \cup C^{\prime}_{\psi})$.
				\end{enumerate}
		\end{lemma}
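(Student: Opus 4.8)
The plan is to argue by contraposition on disjunct~(i). So I assume that \emph{no} clause of $C_{\mathit{KS}}^\prime$ predicates on $s$ and that \emph{no} clause of $C_{\mathit{KS}}^\prime$ predicates on $s^\prime$, and I aim to establish~(ii). Since $C_{\mathcal{A}}^{\prime\prime} \cup C_{\psi}^\prime$ is, by construction, a subset of the unsatisfiable set $C_{\mathcal{A}}^\prime \cup C_{\psi}^\prime$, the definition of unsatisfiable core reduces~(ii) to the single claim that dropping the clause $\LTLg(\neg p(s) \lor \neg p(s^\prime))$ preserves unsatisfiability; that is the only thing I really need to prove.

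First I would record where the state proposition $p(s)$ can occur in $C_{\mathcal{A}}^\prime \cup C_{\psi}^\prime$. The clauses of $C_{\psi}^\prime$ come from $\psi$ via the construction of~\cite{SCHUPPAN2016155} and are therefore written over $AP_{\mathcal{A}}$ only, so they contain no state proposition at all. The clauses of $C_{\mathit{KS}}^\prime$ that could mention $p(s)$ are $c_i$, the $CL_{\top,\mathcal{A}}$/$CL_{\bot,\mathcal{A}}$ clauses for $s$, the transition clause of $s$, and --- crucially --- the transition clause of every predecessor of $s$ (there $p(s)$ sits inside the $\LTLx$, so such a clause still ``predicates on $s$''); all of these are excluded by the assumption $\neg$(i). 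Hence $p(s)$ occurs in $C_{\mathcal{A}}^\prime \cup C_{\psi}^\prime$ only inside clauses of $C_{\mathit{REG}}^\prime$, which all have the shape $\LTLg(\neg p(a) \lor \neg p(b))$, and there $p(s)$ occurs only negatively.

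Next I would run a one-proposition surgery. Assume for contradiction that some infinite word $\sigma$ satisfies $C_{\mathcal{A}}^{\prime\prime} \cup C_{\psi}^\prime$, and let $\sigma^\prime$ agree with $\sigma$ on every proposition except $p(s)$, which $\sigma^\prime$ sets to $\bot$ at every time instant. Every clause that does not mention $p(s)$ keeps its truth value under $\sigma^\prime$, and by the previous step these are all of $C_{\mathit{KS}}^\prime$, all of $C_{\psi}^\prime$, and every clause of $C_{\mathit{REG}}^\prime$ not involving $p(s)$, so they all stay satisfied. Every clause of $C_{\mathit{REG}}^\prime$ that does involve $p(s)$ --- the removed clause $\LTLg(\neg p(s) \lor \neg p(s^\prime))$ included --- is satisfied by $\sigma^\prime$ because $\neg p(s)$ now holds at every instant. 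Thus $\sigma^\prime$ satisfies all of $C_{\mathcal{A}}^\prime \cup C_{\psi}^\prime$, contradicting the hypothesis that $C_{\mathcal{A}}^\prime \cup C_{\psi}^\prime$ is an unsatisfiable core. So $C_{\mathcal{A}}^{\prime\prime} \cup C_{\psi}^\prime$ is unsatisfiable and, being a subset of $C_{\mathcal{A}}^\prime \cup C_{\psi}^\prime$, is an unsatisfiable core of $\eta(C_{\mathcal{A}}^\prime \cup C_{\psi}^\prime)$, which is~(ii). (Symmetrically one could zero out $p(s^\prime)$ instead; only one of the two is needed.)

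The main obstacle I anticipate is purely the bookkeeping of the second paragraph: being exhaustive about which clauses ``predicate on $s$'', in particular not forgetting the transition clauses of the predecessors of $s$ (where $p(s)$ hides under the $\LTLx$), and confirming that the property clauses $C_{\psi}^\prime$ never carry state propositions. Once that is pinned down, the ``set $p(s)$ to $\bot$'' construction goes through mechanically.
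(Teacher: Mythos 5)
Your proposal is correct and follows essentially the same route as the paper: both argue that if no clause of $C_{\mathit{KS}}^\prime$ predicates on $s$ or $s^\prime$ (i.e., $\neg$(i)), then the clause $\LTLg(\neg p(s) \lor \neg p(s^{\prime}))$ cannot be essential, so the remaining subset is still unsatisfiable and hence an unsatisfiable core, giving (ii). The only difference is that where the paper informally asserts that re-adding the clause cannot create a contradiction (since any contradiction would have to arise with the property clauses, which do not mention state propositions), you make this step fully explicit with the valuation surgery that sets $p(s)$ to $\bot$ at every instant and observes that $p(s)$ occurs only negatively outside $C_{\mathit{KS}}^\prime$ --- a sharper justification of the same idea.
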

		
			\begin{proof}				
We indicate $\LTLg(\neg p(s) \lor \neg p(s^{\prime}))$ as $\tau(s,s^\prime)$.
Assume per absurdum that conditions (i) and (ii) are violated, i.e., no SNF clause in $C^\prime_{\mathit{KS}}$  predicates on state $s$ or $s^\prime$ and 
$C_{\mathcal{A}}^{\prime\prime}  \cup  C_{\psi}^\prime$ is not an unsatisfiable core of $C^\prime_{\mathcal{A}} \cup C^{\prime}_{\psi}$.
Since $C_{\mathcal{A}}^{\prime\prime}  \cup  C_{\psi}^\prime$ is not an unsatisfiable core of $C^\prime_{\mathcal{A}} \cup C^{\prime}_{\psi}$, $C_{\mathcal{A}}^{\prime\prime} \cup C^{\prime}_{\psi}$ is satisfiable since $C_{\mathcal{A}}^{\prime\prime} \subset C^\prime_{\mathcal{A}}$.
Since $C_{\mathcal{A}}^{\prime\prime}  \cup  C_{\psi}^\prime$ is satisfiable,
$C_{\mathcal{A}}^{\prime}  \cup  C_{\psi}^\prime$ s.t. $C_{\mathcal{A}}^{\prime}=C_{\mathcal{A}}^{\prime\prime} \cup  \{\tau(s,s^\prime)\}$ must also be satisfiable. Indeed, it does not exist any SNF clause that predicates on state $s$ (or on state $s^{\prime}$) and, in order to generate a contradiction, the added SNF clause must generate it with the SNF clauses obtained from the LTL property $\psi$.
This is a contradiction.
Thus, conditions (i) and (ii) must be satisfied.
			\qed
			\end{proof}

\vspace{2mm}
The \analyze\ procedure in Algorithm~\ref{alg:analyze} has shown how we obtain a TP for a PKS by 
first computing the related optimistic or pessimistic approximation (i.e., a KS) and 
then exploiting the computation of the TP for this KS.

\begin{restatable}{theorem}{topologicalproofcorrectness}
\label{topologicalproofcorrecntess}
Let $M=\langle S, R,S_0,AP,L \rangle$ be a PKS, 
let $\phi$ be an LTL property, 
and let $x\in\{\top,?\}$ be an element such that $[M \models_{3} \phi]=x$.
If the procedure \analyze\, applied to the PKS $M$ and the LTL property $\phi$, returns a TP $\Omega$, this is an $x$-TP for $\phi$ in $M$.
\end{restatable}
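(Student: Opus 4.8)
The plan is to unfold what \analyze\ does in each of its two relevant branches (Lines~\ref{line:satisfied} and~\ref{line:possiblysatisfied} of Algorithm~\ref{alg:analyze}) and reduce the claim to a single ``KS-level'' statement: if \tpcompute\ is applied to a KS $\mathcal{A}$ and a formula $\psi=\magicfunction(\phi)$ with $\mathcal{A}\models^\ast\phi$, then the returned set $\Omega$ has the property that every $\Omega$-related PKS $M'$ satisfies the corresponding approximation check. So first I would fix the two cases. If $x=\top$, then by Theorem~\ref{th:threevaluedMC} we have $M_{pes}\models^\ast\phi$, and \analyze\ returns $\Omega=\tpcompute(M,M_{pes},\magicfunction(\phi))$; I must show every $\Omega$-related $M'$ has $[M'\models_3\phi]\ge\top$, i.e. $M'_{pes}\models^\ast\phi$. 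If $x=?$, then $M_{opt}\models^\ast\phi$ but $M_{pes}\not\models^\ast\phi$, and \analyze\ returns $\Omega=\tpcompute(M,M_{opt},\magicfunction(\phi))$; I must show every $\Omega$-related $M'$ has $[M'\models_3\phi]\ge\,?$, i.e. $M'_{opt}\models^\ast\phi$. In both cases the target is an emptiness/unsatisfiability statement about the product of an approximation of $M'$ with $\neg\phi$.

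Next I would trace the three steps of \tpcompute. In \tosnf, the KS $\mathcal{A}$ and $\psi$ are encoded as $\eta(C_\mathcal{A}\cup C_\psi)$ with $C_\mathcal{A}=C_{KS}\cup C_{REG}$; since $\mathcal{A}\models^\ast\phi$ this formula is unsatisfiable. In \getuc, an unsatisfiable core $\eta(C'_{KS}\cup C'_{REG}\cup C'_\psi)$ is extracted. In \gettp, $\Omega$ is read off from $C'_{KS}$ via Table~\ref{tab:snftotp} (the $C'_{REG}$ clauses being discardable by Lemma~\ref{lemma:creg}). The core of the argument is: take any $\Omega$-related PKS $M'$ and build its relevant approximation $\mathcal{A}'$ (namely $M'_{pes}$ in the $\top$ case, $M'_{opt}$ in the $?$ case); encode it as $\eta(C_{\mathcal{A}'}\cup C_\psi)$. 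The key claim is that $C'_{KS}\cup C'_{REG}\subseteq C_{\mathcal{A}'}$ — i.e. every SNF clause of the unsatisfiable core that came from the KS side still holds of $\mathcal{A}'$. This is where Definition~\ref{def:gammarelated} does the work: a TPI-clause in $\Omega$ forces $S_0$ to be unchanged, so $c_i$ is the same; a TPT-clause $\langle s,T\rangle$ forces the successor set of $s$ to be exactly $T$, so the corresponding $CR$ clause survives; a TPP-clause $\langle s,\alpha,v\rangle$ fixes $L'(s,\alpha)=v$, and one checks — reading Table~\ref{tab:snftotp} backwards, and recalling how $L_{pes}$/$L_{opt}$ derive from $L_c$ which derives from $L$ — that the matching $CL_{\top}$ or $CL_{\bot}$ clause over $AP_c$ is still valid in $\mathcal{A}'$. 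The $C'_{REG}$ clauses are generic ``at most one state'' clauses, true of any KS, hence in $C_{\mathcal{A}'}$ automatically (and by Lemma~\ref{lemma:creg} any of them whose states are untouched could even be dropped). Since $C'_\psi\subseteq C_\psi$ trivially (the property is unchanged), we get $\eta(C_{\mathcal{A}'}\cup C_\psi)\supseteq\eta(C'_{KS}\cup C'_{REG}\cup C'_\psi)$, and a superset of an unsatisfiable clause set is unsatisfiable; hence $\mathcal{A}'\models^\ast\phi$, which is exactly $[M'\models_3\phi]\ge x$ via Theorem~\ref{th:threevaluedMC}. Combined with $[M\models_3\phi]=x$, Definition~\ref{def:topologicalproof} gives that $\Omega$ is an $x$-TP.

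The main obstacle I anticipate is the bookkeeping in the claim $C'_{KS}\subseteq C_{\mathcal{A}'}$, specifically the TPP-clause case: the approximation $\mathcal{A}$ is built over $AP_c=AP\cup\overline{AP}$ with the $L_{pes}/L_{opt}$ collapse of $?$ to $\bot$ (resp.\ $\top$), so a single TPP-clause $\langle s,\alpha,v\rangle$ with $v=?$ may correspond to no SNF clause at all in $C_\mathcal{A}$ (when the $?$ is collapsed the ``wrong way'' and neither $\LTLg(\neg p(s)\lor\alpha)$ nor $\LTLg(\neg p(s)\lor\overline\alpha)$ appears), whereas $v\in\{\top,\bot\}$ generates clauses on both $\alpha$ and $\overline\alpha$. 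I would therefore argue contrapositively: any KS-side clause that actually appears in the core $C'_{KS}$ predicates on a concrete $(s,\alpha)$ pair whose value in $\mathcal{A}$ was in $\{\top,\bot\}$, hence was inherited from $L(s,\alpha)\in\{\top,\bot\}$ or from $comp$ thereof; the corresponding TPP-clause in $\Omega$ pins $L'(s,\alpha)$ to that same value; and the $L_{pes}/L_{opt}$ of $M'$ then reproduces exactly the clause — the $?\mapsto$ collapse is irrelevant because the value was never $?$ on the relevant proposition. A secondary, lighter obstacle is making sure the TPT-clause case respects the fact that $\mathcal{A}'$ may have states or transitions that $M$ lacked: but an $\Omega$-related PKS only adds transitions out of states not mentioned in a TPT-clause, so for every $s$ appearing in a core $CR$ clause the successor set is verbatim preserved, and for the other states the extra clauses only enlarge $C_{\mathcal{A}'}$, which is harmless for unsatisfiability. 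Once these two points are nailed down the rest is the routine ``superset of unsatisfiable is unsatisfiable'' plus an invocation of Theorem~\ref{th:threevaluedMC}, handled uniformly for $x=\top$ and $x=?$ by substituting $pes$ or $opt$ throughout.
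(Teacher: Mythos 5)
Your overall route is the same as the paper's: unfold \analyze\ to fix the approximation ($M_{pes}$ for $x=\top$, $M_{opt}$ for $x=?$), take the UC $C'_{KS}\cup C'_{REG}\cup C'_{\psi}$ produced by \tpcompute, and show that every clause of it reappears in the SNF encoding of the same-direction approximation of any $\Omega$-related $M'$ --- TPI-clauses pin $S_0$, TPT-clauses pin successor sets, TPP-clauses pin labels (Definition~\ref{def:gammarelated}), $C'_{\psi}\subseteq C_{\psi}$ is trivial, $C'_{REG}$ is dismissed via Lemma~\ref{lemma:creg} --- so the same core is a core of the new encoding, hence unsatisfiability persists and Theorem~\ref{th:threevaluedMC} gives $[M'\models_3\phi]\geq x$. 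That is exactly the paper's proof, including the closing appeal to Definition~\ref{def:topologicalproof}.

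However, the resolution you give for what you call the main obstacle is wrong as stated. You claim that any $CL$-clause appearing in the core concerns a pair $(s,\alpha)$ whose approximation value ``was inherited from $L(s,\alpha)\in\{\top,\bot\}$'', so that the $?\mapsto\top/\bot$ collapse never matters. This is false in the $x=?$ branch: there $\mathcal{A}=M_{opt}$, and a core clause $\LTLg(\neg p(s)\lor\alpha)$ may well originate from $L(s,\alpha)=?$ collapsed optimistically to $\top$; the extracted TPP-clause then has value $?$ (the paper's own example proof for $\phi_4$ contains $\langle \mathit{MOVING},\mathit{suck},?\rangle$). Your worry that a $v=?$ TPP-clause ``may correspond to no SNF clause'' is also misplaced: the correct argument --- which you in fact sketch earlier in your plan and which is what the paper uses --- is that Table~\ref{tab:snftotp} records the \emph{original} value $L(s,\alpha)$ (possibly $?$), $\Omega$-relatedness forces $L'(s,\alpha)=L(s,\alpha)$ verbatim, and since the \emph{same} approximation direction ($pes$ in the $\top$ case, $opt$ in the $?$ case) is applied to $M'$, the identical $CL$-clause is regenerated in $C_{\mathcal{B}}$; no case split on whether the value was $?$ is needed or valid. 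A second, smaller inaccuracy: $C'_{REG}\subseteq C_{\mathcal{A}'}$ does not hold ``automatically'', since an $\Omega$-related revision may delete states not mentioned in $\Omega$ and $C_{REG}$ only ranges over surviving states; your fallback to Lemma~\ref{lemma:creg} is the right (and the paper's) way to handle this, so make that the primary argument rather than a parenthetical.
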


\begin{proof}
	Assume  that the
	\analyze\ procedure returns the value $\top$ and a $\top$-TP.
	We show that every $\Omega$-related PKS $M^\prime$ is such that $[M^\prime \models \phi]\geq x$ (Definition~\ref{def:topologicalproof}).
	If \analyze\  returns the value $\top$, it must be that
	$M_{\mathit{pes}}\models^\ast \phi$ by Lines~\ref{line:check_pes} and~\ref{line:satisfied} of Algorithm~\ref{alg:analyze}.
	Furthermore, by Line~\ref{line:satisfied}, $\psi=\magicfunction(\phi)$ and $\mathcal{A}=M_{\mathit{pes}}$.
	Let   $N=\langle S_N, R_N, S_{0,N}, AP_N,L_N \rangle$ be a PKS $\Omega$-related to $M$.
	Let 	$\eta(C_{\mathcal{A}} \cup C_{\psi})$  be the SNF formula associated with $\mathcal{A}$ and $\psi$ and 
	let $\eta(C_{\mathcal{B}} \cup C_{\psi})$  be the SNF formula associated with $\mathcal{B}=N_{\mathit{pes}}$ and $\psi$.
	Let us consider an UC $C^\prime_{\mathcal{A}}$ $\cup C_{\psi}^\prime$ of $C_{\mathcal{A}} \cup C_{\psi}$,
	where $C^\prime_{\mathcal{A}}=C_{\mathit{KS}}^\prime \cup C_{\mathit{REG}}^\prime$, 
	$C_{\mathit{KS}}^\prime \subseteq C_{\mathit{KS}}$,
	$C_{\mathit{REG}}^\prime \subseteq C_{\mathit{REG}}$, and 
	$C_{\psi}^\prime \subseteq C_{\psi}$.
	We show that $C^\prime_{\mathcal{A}} \subseteq C_{\mathcal{B}}$ and $C_{\psi}^\prime \subseteq C_{\psi}$, i.e., the UC is also an UC for the 
SNF formula associated with the approximation $\mathcal{B}$	of the  PKS $N$.
	\begin{itemize}		
	   \item $C_{\psi}^\prime \subseteq C_{\psi}$ is trivially verified since property $\psi$ does not change.
		\item $C^\prime_{\mathcal{A}} \subseteq C_{\mathcal{B}}$, i.e., 
		$(C^\prime_{\mathit{KS}} \cup C_{\mathit{REG}}^\prime) \subseteq C_{\mathcal{B}}$.
		By Lemma~\ref{lemma:creg} we can avoid considering $C_{\mathit{REG}}^\prime$.
By construction (see Line~\ref{step1} of Algorithm~\ref{alg:computetpp}) any clause $c \in C_{\mathit{KS}}^\prime$ belongs to one rule among $CR$, $CL_{pes,\top}$, $ CL_{pes,\bot}$ or $c=c_i$:
		\begin{itemize}
			\item if $c = c_i$ then, by the rules in Table~\ref{tab:snftotp},  there is a TPI-clause $\{S_0\} \in \Omega$. By Definition~\ref{def:gammarelated}, $S_0=S_0^\prime$.
			Thus, $c_i \in C_{\mathcal{B}}$ since $N$ is $\Omega$-related to $M$.
			\item if $c \in CR$ then, by rules in Table~\ref{tab:snftotp}, there is a TPT-clause $\langle s, T \rangle \in \Omega$ where $s\in S$ and $T \subseteq R$.					
			By Definition~\ref{def:gammarelated}, $T=\{s^\prime \in S^\prime | (s,s^\prime)\in R^\prime\}$.
			Thus, $c \in C_{\mathcal{B}}$ since $N$ is $\Omega$-related to $M$. 
			\item if $c \in CL_{\mathcal{A},\top}$ or $c \in CL_{\mathcal{A},\bot}$, by rules in Table~\ref{tab:snftotp},  there is a TPP-clause $\langle s, \alpha, L(s,\alpha) \rangle \in \Omega$ where $s\in S$ and $\alpha \in AP$.			
			By Definition~\ref{def:gammarelated}, $L^\prime(s,\alpha)=L(s,\alpha)$.
			Thus, $c \in C_{\mathcal{B}}$ since $N$ is $\Omega$-related to $M$.			
		\end{itemize}	
	\end{itemize}
	Since $N$ is $\Omega$-related to $M$, it has preserved the elements of $\Omega$. Thus $C^\prime_{\mathcal{A}}\cup C_{\psi}^\prime$ is also an UC of $C_{\mathcal{B}}$.
	It follows that $[N \models \phi] = \top$. 
	
\vspace{2mm}	
The proof from the case in which		\analyze\ procedure returns the value $?$ and a $?$-TP can be derived from the first case.
	\qed	
\end{proof}

\textbf{\textsc{Re-check.}} 
Let  $M$ be a PKS. The \textsc{re-check} algorithm verifies whether a revision $M^\prime$ of $M$ is an $\Omega$-revision.
Let  $M=\langle S, R, S_0, AP, L \rangle$ be a  PKS, 
let  $\Omega$ be an $x$-TP  for $\phi$ in $M$, and 
let $M^\prime=\langle S^\prime, R^\prime, S_0^\prime, AP^\prime, L^\prime \rangle$ be a revision of $M$.
The \textsc{re-check} algorithm returns \texttt{true} if and only if the following holds:
\begin{itemize}
	\item $AP \subseteq AP^\prime$;
\item for every TPP-clause  $\langle s, \alpha, v \rangle \in \Omega$, $v=L^\prime(s, \alpha)$;
\item for every TPT-clause $\langle s, T \rangle \in \Omega$, 
$T=\{s^\prime \in S^\prime|(s,s^\prime)\in\ R^\prime\}$;
\item for every TPI-clause $\langle S_0 \rangle \in \Omega$,   
$S_0 = S^\prime_0$.
\end{itemize}

\begin{lemma}
	\label{lemma:omega-related}
	Let $M=\langle S, R, S_0, AP, L \rangle$ and $M^\prime=\langle S^\prime, R^\prime,$ $S^\prime_0,$ $AP^\prime, L^\prime \rangle$ be two PKSs
	and let $\Omega$ be an $x$-TP.
	The \textsc{re-check} algorithm returns \texttt{true} if and only if $M^\prime$ is $\Omega$-related to $M$.
\end{lemma}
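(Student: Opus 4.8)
The plan is to prove the biconditional by directly matching the termination condition of the \textsc{re-check} algorithm against Definition~\ref{def:gammarelated} instantiated with $\Gamma = \Omega$. Since $\Omega$ is an $x$-TP for $\phi$ in $M$, it is in particular a set of TP-clauses for $M$ (Definitions~\ref{def:ksclause} and~\ref{def:topologicalproof}), so every clause occurring in $\Omega$ is a TPP-, TPT-, or TPI-clause referring to states, propositions, and the initial-state set of $M$; consequently all the membership tests and equalities checked by \textsc{re-check} are well defined on $M^\prime$ whenever $M^\prime$ is a revision of $M$, i.e. whenever $AP \subseteq AP^\prime$.

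First I would recall that, by its specification, the \textsc{re-check} algorithm returns \texttt{true} exactly when the following four conjuncts hold: (a) $AP \subseteq AP^\prime$; (b) $v = L^\prime(s,\alpha)$ for every TPP-clause $\langle s,\alpha,v\rangle \in \Omega$; (c) $T = \{s^\prime \in S^\prime \mid (s,s^\prime)\in R^\prime\}$ for every TPT-clause $\langle s, T\rangle \in \Omega$; (d) $S_0 = S^\prime_0$ for every TPI-clause $\langle S_0\rangle \in \Omega$. Next I would place these side by side with the four conditions of Definition~\ref{def:gammarelated} that characterise an $\Omega$-related PKS $M^\prime$ of $M$: $AP \subseteq AP^\prime$; $v = L^\prime(s,\alpha)$ for every TPP-clause $\langle s,\alpha,v\rangle \in \Omega$; $T = \{s^\prime \in S^\prime \mid (s,s^\prime)\in R^\prime\}$ for every TPT-clause $\langle s, T\rangle \in \Omega$; $S_0 = S^\prime_0$ for every TPI-clause $\langle S_0\rangle \in \Omega$. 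The two lists are identical clause by clause.

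Hence the conjunction of the \textsc{re-check} conditions is logically equivalent to the conjunction of the $\Omega$-relatedness conditions, which yields both implications at once: if \textsc{re-check} returns \texttt{true} then all four conditions of Definition~\ref{def:gammarelated} hold, so $M^\prime$ is $\Omega$-related to $M$; and conversely, if $M^\prime$ is $\Omega$-related to $M$ then all four conditions hold, so \textsc{re-check} returns \texttt{true}. There is essentially no obstacle here — the only points requiring a little care are that the quantifiers in (b)--(d) range over the clauses actually present in $\Omega$ (for instance, if $\Omega$ contains no TPT-clause for some state $s$, then no constraint is imposed on the outgoing transitions of $s$ in $M^\prime$), and that the well-definedness of the equalities relies on condition (a), which is checked first; both are immediate from the definitions.
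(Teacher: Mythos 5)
Your proof is correct and follows essentially the same route as the paper's: both arguments observe that the four conditions checked by \textsc{re-check} coincide clause by clause with the conditions of Definition~\ref{def:gammarelated}, giving both directions of the biconditional at once. Your additional remarks on well-definedness and on the quantifiers ranging only over clauses present in $\Omega$ are fine but not needed beyond what the paper states.
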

\begin{proof}
Since $M^\prime$ is $\Omega$-related to $M$, 
the conditions of Definition~\ref{def:gammarelated} hold.
Each of these conditions corresponds to a condition of the \textsc{re-check} algorithm. 
Thus, if $M^\prime$ is $\Omega$-related to $M$, the \textsc{re-check} returns \texttt{true}.
Conversely, if \textsc{re-check} returns \texttt{true}, each condition of the algorithm is satisfied and, since each of this conditions is mapped to a condition of Definition~\ref{def:gammarelated},
$M^\prime$ is $\Omega$-related to $M$. \qed	
\end{proof}

\begin{restatable}{theorem}{topologicalproofcorrecntess}
\label{recheckcorrectness}
Let $M$ be a PKS, 
let $\phi$ be a property, 
let $\Omega$ be an $x$-TP for $\phi$ in $M$ where $x \in \{\top,?\}$, and
let $M^\prime$ be a revision of $M$.
The \textsc{re-check} algorithm returns \texttt{true} if and only if $M^\prime$ is an $\Omega$-revision of $M$.
\end{restatable}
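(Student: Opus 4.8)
The plan is to compose the two results already established just before the statement. By Lemma~\ref{lemma:omega-related}, for the PKSs $M$ and $M^\prime$ and the $x$-TP $\Omega$, the \textsc{re-check} algorithm returns \texttt{true} if and only if $M^\prime$ is $\Omega$-related to $M$. By Definition~\ref{def:omegaRevisionDef}, $M^\prime$ is an $\Omega_x$-revision of $M$ precisely when $M^\prime$ is $\Omega$-related to $M$. Chaining these two equivalences immediately yields the claim, so the proof is essentially a two-line argument once the hypotheses are checked to fit.

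First I would observe that the hypotheses are exactly what is needed to invoke these results: $\Omega$ is an $x$-TP for $\phi$ in $M$, so Definition~\ref{def:omegaRevisionDef} applies to this $\Omega$, and $M^\prime$ being a revision of $M$ means in particular that $M^\prime$ is a PKS with $AP \subseteq AP^\prime$, so $\Omega$-relatedness between $M$ and $M^\prime$ is well defined and Lemma~\ref{lemma:omega-related} is applicable. Note also that the requirement $AP \subseteq AP^\prime$ inherited from $M^\prime$ being a revision is precisely the first bullet of Definition~\ref{def:gammarelated}, so there is no tension between the notion of revision and the notion of $\Omega$-relatedness.

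The argument then runs in both directions. ($\Rightarrow$) If \textsc{re-check} returns \texttt{true}, then by Lemma~\ref{lemma:omega-related} $M^\prime$ is $\Omega$-related to $M$, and hence by Definition~\ref{def:omegaRevisionDef} $M^\prime$ is an $\Omega_x$-revision of $M$. ($\Leftarrow$) Conversely, if $M^\prime$ is an $\Omega_x$-revision of $M$, then by Definition~\ref{def:omegaRevisionDef} $M^\prime$ is $\Omega$-related to $M$, and by Lemma~\ref{lemma:omega-related} the \textsc{re-check} algorithm returns \texttt{true}.

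I do not expect a genuine obstacle here: the real content was discharged in Lemma~\ref{lemma:omega-related}, whose proof verifies that each test performed by the \textsc{re-check} algorithm corresponds to one of the four conditions of Definition~\ref{def:gammarelated}. The only points requiring mild care are bookkeeping ones: making sure the $x$ appearing in ``$\Omega_x$-revision'' is the same $x$ for which $\Omega$ is an $x$-TP (immediate from the statement), and noting that the concrete value $x \in \{\top,?\}$ is irrelevant to this equivalence beyond guaranteeing that $\Omega$ is a legitimate topological proof.
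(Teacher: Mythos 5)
Your proposal is correct and follows essentially the same route as the paper's own proof: chaining Lemma~\ref{lemma:omega-related} (\textsc{re-check} returns \texttt{true} iff $M^\prime$ is $\Omega$-related to $M$) with Definition~\ref{def:omegaRevisionDef} to conclude the equivalence with being an $\Omega$-revision. The extra bookkeeping remarks you add (applicability of the lemma, the role of $AP \subseteq AP^\prime$, and the irrelevance of the specific value of $x$) are consistent with the paper's argument.
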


\begin{proof}	
	By applying Lemma~\ref{lemma:omega-related}, the \textsc{re-check} algorithm returns \texttt{true} if and only if $M^\prime$ is 
	$\Omega$-related to $M$.
	By Definition~\ref{def:omegaRevisionDef},
	since $\Omega$ is an $x$-TP, the \textsc{re-check} algorithm returns \texttt{true} if and only if $M^\prime$ is an $\Omega$-revision of $M$. \qed
\end{proof}

The \textsc{analysis} and \textsc{re-check} algorithms assume that the three-valued LTL semantics is considered.
Indeed, the algorithm upon which the proof generation framework is developed is the  three-valued model checking algorithm~\cite{bruns2000generalized}.
However, the proposed results are also valid considering the thorough semantics if the properties of interest are self-minimizing.
This is not a strong limitation since, as shown in~\cite{godefroid2005model}, most practically useful LTL formulae are self-minimizing.
Future work will consider how to extend the \textsc{analysis} and \textsc{re-check} to completely support the thorough  LTL semantics.
 \section{Evaluation}
\label{sec:evaluation}

To evaluate how \NAME\ supports designers, we implemented \NAME\ as a Scala stand alone application which is available at \url{http://goo.gl/V4fSjG}.
Specifically, we considered the following research questions:
\begin{enumerate*}
\item[] \textbf{RQ1:} How does the \textsc{analysis} help in creating models revisions?
\item[] \textbf{RQ2:} How does the \textsc{re-check} help in creating models revisions?
\end{enumerate*}

\vspace{2mm}
\noindent
\textbf{Evaluation setup.}
To answer RQ1 and RQ2 we considered a set of example PKSs proposed in literature to evaluate the $\chi\mathit{Chek}$~\cite{1201295} model checker.
These examples are divided into three categories.
Each category represents an  aspect of a phone call, i.e., 
the \texttt{callee}, \texttt{caller}, and \texttt{caller-callee} categories include PKSs modeling respectively the callee process, the caller process, and the overall calleer-callee process.
To simulate iterative design performed using the \NAME\ framework all examples were slightly modified. 
Uncertainty on the transitions was removed in all the PKSs of the considered examples.
In addition, the examples in the \texttt{callee} category were modified such that the designer iteratively generates refinements of PKSs, i.e., $?$ is assigned to atomic propositions to generate abstractions of the final KS \texttt{callee-4}.
Instead, the examples in the \texttt{caller-callee} category were modified since the original examples were proposed to evaluate designer disagreement about the value to assign to a proposition in a state~\cite{Chechik2006}. 
Thus, in a state $s$ a proposition could be assigned to multiple values in the set $\{\top, ?, \bot\}$.
To generate a PKS useful for our evaluation, we used the following transformations: 
when, in a state $s$, a proposition was assigned to the values $\{\top,\bot \}$, $\{ \top, ?\}$, or $\{\bot, ? \}$   these values were replaced with $?$; when, in a state $s$, a proposition was assigned values $\{\top,\top\}$ (resp. $\{\bot,\bot\}$),   these values were replaced with $\top$ (resp. $\bot$).

We defined a set of properties based on well known LTL property patterns~\cite{dwyer1998property}.
This choice was motivated by the fact that the original properties used in the examples were specified in Computation Tree Logic (CTL), which is not supported by \NAME. 

The defined properties were inspired by the original properties from the examples and are listed in Table~\ref{tab:properties}.

\color{red}
\begin{table}[t]
	\small
	\caption{Properties considered in the evaluation} 	
	\begin{tabular}{llll}
		\toprule
		$\phi_1$: &&& $\LTLg (\neg \mathit{OFFHOOK}) \lor (\neg\mathit{OFFHOOK}~ \LTLu ~\mathit{CONNECTED})$ \\
		$\phi_2$: &&& $\neg\mathit{OFFHOOK}~\LTLw~(\neg\mathit{OFFHOOK} \land \mathit{CONNECTED})$\\
		$\phi_3$: &&& $\LTLg (\mathit{CONNECTED} \rightarrow \mathit{ACTIVE})$ \\
		$\phi_4$: &&& $\LTLg (\mathit{OFFHOOK} \land \mathit{ACTIVE} \land \neg\mathit{CONNECTED} \rightarrow \LTLx(\mathit{ACTIVE}))$ \\
		$\phi_5$ &&& $\LTLg(\mathit{CONNECTED} \rightarrow \LTLx(\mathit{ACTIVE}))$ \\
		\midrule
		$\psi_1$: &&& $\LTLg(\mathit{CONNECTED} \rightarrow \mathit{ACTIVE})$ \\
		$\psi_2$: &&& $\LTLg(\mathit{CONNECTED} \rightarrow \LTLx(\mathit{ACTIVE}))$\\
		$\psi_3$: &&& $\LTLg(\mathit{CONNECTED}) \lor (\mathit{CONNECTED}~\LTLu~\neg\mathit{OFFHOOK})$ \\
		$\psi_4$: &&& $\neg\mathit{CONNECTED}~\LTLw~(\neg\mathit{CONNECTED} \land \mathit{OFFHOOK})$ \\
		$\psi_5$: &&& $\LTLg(\mathit{CALLEE\_SEL} \rightarrow \mathit{OFFHOOK})$ \\
		\midrule
		$\eta_1$: &&& $\LTLg((\mathit{OFFHOOK} \land \mathit{CONNECTED}) \rightarrow \LTLx(\mathit{OFFHOOK} \lor \neg\mathit{CONNECTED}))$ \\
		$\eta_2$: &&& $\LTLg(\mathit{CONNECTED}) \lor (\mathit{CONNECTED}~\LTLw~\neg\mathit{OFFHOOK})$\\
		$\eta_3$: &&& $\neg\mathit{CONNECTED}~\LTLw~(\neg\mathit{CONNECTED} \land \mathit{OFFHOOK})$ \\
		$\eta_4$: &&& $\LTLg(\mathit{CALLEE\_FREE} \lor \mathit{LINE\_SEL})$ \\
		$\eta_5$: &&& $\LTLg(\LTLx(\mathit{OFFHOOK}) \land \neg\mathit{CONNECTED})$ \\
		\bottomrule
	\end{tabular}
	\label{tab:properties}
\end{table}
\color{black}

\vspace{2mm}
\noindent
\textbf{RQ1.} To answer RQ1 we checked how the proofs output by the \textsc{analysis} algorithm were useful in producing PKSs revisions.
To evaluate the usefulness we checked how easy it was to analyze the property satisfaction on the proofs w.r.t. the original models.
This was done by comparing the size of the proofs and the size of the original models.
The size of a PKS $M=\langle S, R, S_0, AP, L \rangle$ was defined as $|M|=|AP|*|S|+|R|+|S_0|$.
The size of a proof $\Omega$ was defined as $|\Omega|=\underset{c \in \Omega}{\sum} |c|$ where: 
$|c|=1$ if $c=\langle s,\alpha,v \rangle$;
$|c|=|T|$ if $c=\langle s,T \rangle$, and 
$|c|=|S_0|$ if $c=\langle S_0\rangle$.
Table~\ref{tab:experiments} summarizes the obtained results, indicated in the columns under the label `RQ1'.  
We  show the cardinalities
$|S|$, $|R|$ and $|AP|$ of the sets of states, transitions, and atomic propositions of each considered PKS $M$, the number $|?|$ of couples of a state $s$ with an atomic proposition $\alpha$ such that $L(s,\alpha)=?$, the total size $|M|$ of the model, and the size $|\Omega_p|_x$ of the proofs, where $p$ indicates the considered LTL property and $x$ indicates whether $p$ is satisfied ($x=\top$) or possibly satisfied ($x=?$).
Cells labeled with the symbol $\times$ indicate that a property was not satisfied  in that model and thus a proof was not produced by the \textsc{analysis} algorithm.
It can be observed that  the size of the proof was always lower than the size of the initial model.
This is an indicator that shows that proofs are easier to understand than the original models, since they include a subset of the elements of the models that ensure that a property is satisfied (resp. possibly satisfied).

\begin{table}[t]
	\caption{Cardinalities $|S|$, $|R|$, $|AP|$, $|?|$, and $|M|$ are those of the evaluated model $M$.
	$|\Omega_p|_x$ is the size of proof $\Omega_p$ for a property $p$;
$x$ indicates if $\Omega_p$ is a $\top$-TP or a $?$-TP.} 
\begin{tabular}{lcccccccccc|ccccl}
\toprule
 & \multicolumn{10}{c}{\textbf{RQ1}} & \multicolumn{5}{c}{\textbf{RQ2}}\\
\toprule
Model & $|S|$ & $|R|$ & $|AP|$ & $|?|$ & $|M|$ & $|\Omega_{\phi_1}|$ & $|\Omega_{\phi_2}|$ & $|\Omega_{\phi_3}|$ & $|\Omega_{\phi_4}|$ & $|\Omega_{\phi_5}|$ & $\phi_1$ & $\phi_2$ & $\phi_3$ &$\phi_4$ & $\phi_5$ \\
\midrule
callee-1 & 5 & 15 & 3 & 7 & 31 & $7_?$ & $9_?$ & $21_?$ & $23_?$ & $23_?$ & \textbf{-} & \textbf{-}  & \textbf{-} & \textbf{-} & \textbf{-} \\
callee-2 & 5 & 15 & 3 & 4 & 31 & $7_?$ & $9_?$ & $21_?$ & $22_\top$ & $\times$ & \cmark\ & \cmark\ & \cmark\ & \cmark\ & \xmark\ \\
callee-3 & 5 & 15 & 3 & 2 & 31 & $7_?$ & $9_?$ & $21_?$ & $23_\top$ & $\times$ & \cmark\ & \cmark\ & \cmark\ & \cmark\ & \textbf{-}\\
callee-4 & 5 & 15 & 3 & 0 & 31 & $\times$ & $\times$ & $23_\top$ & $21_\top$ & $\times$ & \xmark\ & \xmark\ & \cmark\ & \cmark\ & \textbf{-} \\
\midrule
Model & $|S|$ & $|R|$ & $|AP|$ & $|?|$ & $|M|$ & $|\Omega_{\psi_1}|$ & $|\Omega_{\psi_2}|$ & $|\Omega_{\psi_3}|$ & $|\Omega_{\psi_4}|$ & $|\Omega_{\psi_5}|$ & $\psi_1$ & $\psi_2$ & $\psi_3$ &$\psi_4$ & $\psi_5$ \\
\midrule
caller-1 & 6 & 21 & 5 & 4 & 52 & $28_?$ & $\times$ & $2_\top$ & $9_?$ & $28_?$ & \textbf{-} & \textbf{-} & \textbf{-} & \textbf{-} & \textbf{-} \\
caller-2 & 7 & 22 & 5 & 4 & 58 & $30_?$ & $\times$ & $2_\top$ & $9_?$ & $30_?$ & \cmark\ & \textbf{-} & \cmark\ & \cmark\ & \cmark\ \\
caller-3 & 6 & 19 & 5 & 1 & 50 & $26_\top$ & $28_\top$ & $2_\top$ & $11_\top$ & $26_\top$ & \cmark\ & \textbf{-} & \cmark\ & \cmark\ & \cmark\ \\
caller-4 & 6 & 21 & 5 & 0 & 52 & $28_\top$ & $\times$ & $2_\top$ & $9_\top$ & $28_\top$ & \cmark\ & \xmark\ & \cmark\ & \cmark\ & \cmark\ \\
\midrule
Model & $|S|$ & $|R|$ & $|AP|$ & $|?|$ & $|M|$ & $|\Omega_{\eta_1}|$ & $|\Omega_{\eta_2}|$ & $|\Omega_{\eta_3}|$ & $|\Omega_{\eta_4}|$ & $|\Omega_{\eta_5}|$ & $\eta_1$ & $\eta_2$ & $\eta_3$ &$\eta_4$ & $\eta_5$ \\
\midrule
caller-callee-1 &  6 & 30 & 6 & 30 & 61 & $37_?$ & $2_\top$ & $15_?$ & $37_?$ & $\times$ & \textbf{-} & \textbf{-} & \textbf{-} & \textbf{-} & \textbf{-} \\
caller-callee-2 &  7 & 35 & 6 & 36 & 78 & $43_?$ & $2_\top$ & $18_?$ & $43_?$ & $\times$ & \cmark\ & \cmark\ & \cmark\ & \cmark\ & \textbf{-} \\
caller-callee-3 &  7 & 45 & 6 & 38 & 88 & $53_?$ & $2_\top$ & $53_?$ & $53_?$ & $53_?$ & \cmark\ & \cmark\ & \cmark\ & \cmark\ & \textbf{-} \\
caller-callee-4 &  6 & 12 & 4 & 0 & 42 & $\times$ & $\times$ & $\times$ & $19_\top$ & $\times$ & \xmark\ & \xmark\ & \xmark\ & \cmark\ & \xmark\ \\
\bottomrule
\end{tabular}
	\label{tab:experiments}
\end{table}

\vspace{2mm}
\noindent
\textbf{RQ2.} 
To answer RQ2 we checked how the results produced by the \textsc{re-check} algorithm were useful in producing PKSs revisions.
To evaluate the usefulness we assumed that, for each category of examples, the designer produced revisions following the order specified in Table~\ref{tab:experiments}.
The columns under the label `RQ2' contain the different properties that have been analyzed for each category.
A cell contains \cmark\ if the \textsc{re-check} was passed by the considered revised model, i.e., a \texttt{true} value was returned by the \textsc{re-check} algorithm, \xmark\ otherwise. The \textit{dash} symbol \textbf{-} is used when the model of the correspondent line is not a revision (i.e., the first model of each category) or when the observed property was false in the previous model, i.e., an $x$-TP was not produced.
We inspect results produced by the \textsc{re-check} algorithm to evaluate their utility in verifying if revisions were violating the proofs.
A careful observation of Table \ref{tab:experiments} reveals that, in many cases, the \NAME\ \textsc{re-check} notifies the designer that the proposed revision violates some of the clauses contained in the $\Omega$-proof. 
This suggests that the \textsc{re-check} is effective in helping the designer in creating model revisions.

\vspace{2mm}
\noindent
\textbf{Threats to validity.} 
The changes applied to the existing models are a threat to construct validity since they may generate models which are not realistic. 
To mitigate this threat, we referred to textual explanations in the papers in which the examples were presented to generate reasonable revisions. 
Biases in the creation of PKSs are a threat to internal validity.
To mitigate this threat, we designed our models starting from already existing models. 
The limited number of examples is a threat to external validity. 
To mitigate this threat, we verified that all the possible output cases of \NAME\ were obtained at least once.

\vspace{2mm}
\noindent
\textbf{Scalability.}
Three-valued model checking is as expensive as classical model checking~\cite{bruns1999model}, which is commonly employed in real world problems analysis~\cite{Woodcock09}. 
Unsatisfiability checking and UCs computation has been employed to verify digital hardware and software systems~\cite{hustadt2003trp++}.
The \textsc{analysis} phase of \NAME\ simply combines three-valued model checking and UCs computation, therefore
its scalability improves as the performance of the employed integrated frameworks enhances.
Future investigation may evaluate if the execution of the \textsc{re-check} algorithm (which is a simple syntactic check) speeds up the verification framework by avoiding the re-execution of the \textsc{analysis} algorithm in the cases in which revisions satisfy the proofs. 
 \section{Related work}
\label{sec:related}
\begin{sloppypar}
Partial knowledge has been considered in requirement analysis and elicitation~\cite{menghi2017integrating,menghi2017cover,letier2008deriving},
in novel robotic planners~\cite{10.1007/978-3-319-95582-7_24,menghi2018towards},
and in the production of software models that satisfy a set of desired properties~\cite{uchitel2009synthesis,uchitel2013supporting,famelis2012partial,albarghouthi2012under}.
Several researchers analyzed the model checking problem for partially specified systems~\cite{menghi2016dealing,chechik2004multi},
some considering three-valued~\cite{larsen1988modal,godefroid2001abstraction,bruns1999model,bruns2000generalized,godefroid2011ltl}, 
others multi-valued~\cite{gurfinkel2003multi,bruns2004MCmultivalued} scenarios.
Other works apply model checking to incremental program development~\cite{henzinger2003extreme,beyer2007software}.
However, all these model checking approaches do not provide an \emph{explanation} on why a property is satisfied, by means of a \emph{certificate} or \emph{proof}.
Although several works have tackled this problem~\cite{Bernasconi2017,cleaveland2002evidence,PZ01,PPZ01,griggio2018certifying,deng2017witnessing},
they aim mostly to automate proof reproducibility rather than actually helping the design process and they usually produce deductive proofs which are different from the topological proofs presented in this paper.
\end{sloppypar}

Tao and Li~\cite{tao2017complexity} propose a theoretical solution to a related issue: \textit{model repair} is the problem of finding the minimum set of states in a KS which makes a formula satisfiable. However, this problem is diffenent than the one addressed in this paper. Furthermore, the framework is only theoretical an based on complete systems.

\emph{Witnesses} have also been proposed in literature as an instrument to explain why a property is satisfied~\cite{Biere:1999:SMC:309847.309942,hong2002temporal,namjoshi2001certifying}.
Given an existential LTL formula and a model $M$, a witness is usually defined as a path that satisfies that formula.
This is different than the notion of topological proof  proposed in this work, where a proof is defined as a slice of the model $M$.

We are not aware of any work, except for \cite{Bernasconi2017},
that combines model checking and proofs in a multi-valued context.
Here the proposed proofs are verbose, obtained manually, and their effectiveness is not shown in practical applications.
This paper extends~\cite{Bernasconi2017} 
by defining topological proofs and model revisions, 
and by providing a working and practical environment.

 \section{Conclusions}
\label{sec:conclusions}

We have proposed \NAME , an integrated framework that allows a software designer to refine and revise her model proposal in a continuous verification setting. The framework, implemented in a tool for practical use, allows to specify partial models and properties to be verified. It checks these models against the requirements and provides a guide for the designer who wishes to preserve slices of her model that contribute to satisfy fundamental requirements while other components are modified.
For these purposes, the novel notion of topological proof has been formally and algorithmically described. This corresponds to a set of constraints that, if kept when changing the proposed model, ensure that the behavior of the model w.r.t. the property of interest is preserved.
\NAME\ was evaluated by showing the effectiveness of the \textsc{analysis} and \textsc{re-check} algorithms included in the framework. Results showed that proofs are in general easier to understand than the original models and thus \NAME\ can help the design process effectively.
Future work will consider supporting LTL thorough semantics in all phases of the framework and providing a deeper evaluation to show the speed-up offered by the \textsc{re-check} phase w.r.t. \textsc{analysis} re-execution.

\bibliographystyle{abbrv}

\end{document}